\newcommand{\red}[1]{{#1}\xspace}
\newcommand{\markRed}{}
\def\tuple#1{\langle#1\rangle}
\def\eqref#1{(\ref{#1})}
\newcommand{\E}{\exists}
\def\false{\mathit{false}}
\newcommand{\mZ}{\mathcal{Z}}
\newcommand{\fALC}{$\mathit{f}\!\mathcal{ALC}$\xspace}
\newcommand{\myend}{\mbox{}\hfill{\footnotesize$\blacksquare$}}
\newcommand{\comment}[1]{}
\newcommand{\SV}{\Sigma_v}
\newcommand{\SE}{\Sigma_e}
\newcommand{\Next}{\mathit{Next}}
\newcommand{\Prev}{\mathit{Prev}}
\newcommand{\NextP}{\mathit{Next}'}
\newcommand{\PrevP}{\mathit{Prev}'}
\newcommand{\removeZ}{\textit{removed\_Z\_tp}}
\newcommand{\EdgeID}{\mathit{EdgeID}}
\newcommand{\EdgeIDp}{\mathit{EdgeID}'}
\newcommand{\Bool}{\mathit{bool}}
\newcommand{\Null}{\mathit{null}}
\newcommand{\rcNextP}{\mathit{rcNext}'}
\newcommand{\rcPrev}{\mathit{rcPrev}}
\newcommand{\rcNextElemP}{\mathit{rcNextElem}'}
\newcommand{\rcNext}{\mathit{rcNext}}
\newcommand{\rcPrevP}{\mathit{rcPrev}'}
\newcommand{\rcNextElem}{\mathit{rcNextElem}}
\newcommand{\ComputeSimulationEfficiently}{ComputeSimulationEfficiently}
\newcommand{\ComputeSimulation}{ComputeSimulation}
\newcommand{\ComputeDirectedSimulationEfficiently}{ComputeDirectedSimulationEfficiently}
\newcommand{\ComputeDirectedSimulation}{ComputeDirectedSimulation}
\newtheorem{theorem}{Theorem}[section]
\newtheorem{lemma}[theorem]{Lemma}
\newtheorem{proposition}[theorem]{Proposition}
\newtheorem{corollary}[theorem]{Corollary}
\newtheorem{Definition}[theorem]{Definition}
\newtheorem{Example}[theorem]{Example}
\newtheorem{Remark}[theorem]{Remark}
\newenvironment{example}{\begin{Example}\begin{em}}{\end{em}\end{Example}}
\newenvironment{remark}{\begin{Remark}\begin{em}}{\end{em}\end{Remark}}
\newenvironment{proof}{
	
	\smallskip
	
	\noindent
	{\em Proof.}}{
	
	\smallskip
	
}
\newcommand{\email}[1]{\mbox{Email: \url{#1}}}
\begin{document}
\sloppy
	
\title{Computing Crisp Simulations and Crisp Directed Simulations\\ for Fuzzy Graph-Based Structures}
		
\author{Linh Anh Nguyen}
		
\affil{
	Institute of Informatics, University of Warsaw, 
	Banacha 2, 02-097 Warsaw, Poland,  
	\email{nguyen@mimuw.edu.pl}
}


\date{(first version: August 10, 2020)}

\maketitle

\begin{abstract}
Like bisimulations, simulations and directed simulations are used for analyzing graph-based structures such as automata, labeled transition systems, linked data networks, Kripke models and interpretations in description logic. Simulations characterize the class of existential modal formulas, whereas directed simulations characterize the class of positive modal formulas. These notions are worth studying. For example, one may be interested in checking whether a given finite automaton simulates another or whether an object in a linked data network has all positive properties that another object has. To deal with vagueness and uncertainty, fuzzy graph-based structures are used instead of crisp ones. In this article, we design efficient algorithms with the complexity $O((m+n)n)$ for computing the largest crisp simulation and the largest crisp directed simulation between two finite fuzzy labeled graphs, where $n$ is the number of vertices and $m$ is the number of nonzero edges of the input fuzzy graphs. \red{We also adapt them to computing the largest crisp simulation and the largest crisp directed simulation between two finite fuzzy automata.} 
\end{abstract}


\section{Introduction}
\label{section:intro}

Bisimulations and simulations are used for characterizing similarity between states in automata, labeled transition systems and Kripke models. When applied to domains such as linked data networks or description logics, they can be used for characterizing similarity between objects or individuals. 
Phrases like ``two states are bisimilar'' or ``[something] simulates another'' are familiar in computer science. Searching Google Scholar with both the keywords ``bisimulation'' and ``simulation'' returns more than 13000 results. This shows their popularity and usefulness in computer science (see, e.g., \cite{vBenthem76,Park81,vBenthem83,vBenthem84,HennessyM85,BRV2001,Sangiorgi09}). 

Bisimulations have the following logical characterization in modal logic: modal formulas are invariant under bisimulations and, in image-finite Kripke models, two states are bisimilar (i.e., in the largest bisimulation relation) iff they cannot be distinguished by any modal formula (see, e.g., \cite{BRV2001}). There are different variants of modal logic and different corresponding notions of bisimulation. Besides, the second assertion (called the Hennessy-Milner property) can be made stronger by replacing ``image-finite'' with ``modal saturated''. The largest auto-bisimulation of a Kripke model is an equivalence relation.  

Simulations require fewer conditions than bisimulations and characterize only the class of existential modal formulas, which are formulas in negation normal form without universal modal operators. Namely, they do not require the ``backward'' condition(s) of bisimulations, existential modal formulas are preserved under simulations and, in image-finite Kripke models, a state $x'$ simulates a state $x$ iff $x'$ satisfies all existential modal formulas that $x$ satisfies (see, e.g., \cite{BRV2001}). The largest auto-simulation of a Kripke model is a pre-order.   

The notion of directed simulation was introduced by Kurtonina and de Rijke~\cite{KurtoninaR97} for modal logic. It was later formulated and studied by Divroodi and Nguyen~\cite{BSDL-P-LOGCOM} for description logics. Directed simulations are similar to bisimulations in that they use both the ``forward'' and ``backward'' conditions. The difference is that, if a state $x$ is directedly simulated by a state $x'$, then the label of $x$ is only required to be a subset of the label of $x'$, whereas two states in a bisimulation relation must have the same label. Directed simulations characterize the class of positive modal formulas~\cite{KurtoninaR97}. That is, positive modal formulas are preserved under directed simulations and, in image-finite Kripke models, a state $x'$ directedly simulates a state $x$ iff $x'$ satisfies all positive modal formulas that $x$ satisfies. The largest directed auto-simulation of a Kripke model is a~pre-order. 

The domains to which bisimulations, simulations and directed simulations are applied, such as automata, labeled transition systems, Kripke models, linked data networks and interpretations in description logic, are graph-based structures. To deal with vagueness and uncertainty, fuzzy graph-based structures are used instead, including fuzzy automata, fuzzy transition systems and weighted social networks.
In such structures, both labels of vertices (states or individuals) and labels of edges (transitions or connections) can be fuzzified. 
Extending the notions of bisimulation, simulation and directed simulation for fuzzy graph-based structures, we can consider crisp or fuzzy relations for them. Crisp bisimulations/simulations have been studied for fuzzy transition systems~\cite{CaoCK11,DBLP:journals/fss/WuD16,CaoSWC13,DBLP:journals/fss/WuCBD18}, weighted automata~\cite{StanimirovicSC2019}, fuzzy modal logics~\cite{Fan15} and fuzzy description logics~\cite{Nguyen-TFS2019,jBSfDL2}. Fuzzy bisimulations/simulations have been studied for fuzzy automata~\cite{CiricIDB12,CiricIJD12,MicicJS18}, weighted/fuzzy social networks~\cite{ai/FanL14,IgnjatovicCS15}, fuzzy modal logics~\cite{Fan15} and fuzzy description logics~\cite{jBSfDL2,minimization-by-fBS,TFS2020}. 
As shown in~\cite{Fan15,jBSfDL2}, the logical characterization of crisp bisimulations differs from the logical characterization of fuzzy bisimulations (in fuzzy modal/description logics under the G\"odel semantics) in that it uses a logical language extended with involutive negation or the Baaz projection operator. A similar claim can be stated for the difference between crisp simulations (respectively, crisp directed simulation) and fuzzy simulations (respectively, fuzzy directed simulation)~\cite{jBSfDL3}. 

The objective of this article is to design efficient algorithms for computing crisp simulations and crisp directed simulations between fuzzy graph-based structures. As related works, the closest ones are discussed below.
\begin{itemize}
\item In~\cite{PaigeT87} Paige and Tarjan gave an efficient algorithm with the complexity $O((m+n)\log{n})$ for computing the coarsest partition of a finite crisp graph, where $n$ is the number of vertices and $m$ is the number of edges of the graph. This problem is of the same nature as the problem of computing the largest auto-bisimulation of a finite crisp graph. Bloom and Paige~\cite{BloomP95} and Henzinger {\em et al.}~\cite{HenzingerHK95} gave algorithms with the complexity $O((m + n)n)$ for computing the largest auto-simulation of a crisp labeled transition system or a crisp graph, respectively. 

\item Adapting the idea of the above mentioned algorithm of Henzinger {\em et al.}~\cite{HenzingerHK95}, in~\cite{CompBSDLP} we gave an algorithm with the complexity $O((m + n)n)$ for computing the largest directed auto-simulation of a finite crisp graph. Furthermore, in that paper we also gave algorithms with the complexity \mbox{$O((m + n)^2 n^2)$} for computing the largest auto-simulation and the largest directed auto-simulation of a finite crisp graph for the setting with counting successors.

\item In~\cite{StanimirovicSC2019} Stanimirovi{\'c} {\em et at.}\ gave algorithms with the complexity $O(n^3)$ for computing the greatest right/left invariant Boolean equivalence matrix of a weighted automaton over an additively idempotent semiring. Such matrices are closely related to crisp bisimulations. They also gave algorithms with the complexity $O(n^5)$ for computing the greatest right/left invariant Boolean quasi-order matrix of a~weighted automaton over an additively idempotent semiring. Such matrices are closely related to crisp simulations. 

\item In~\cite{TFS2020} Nguyen and Tran gave algorithms with the complexity $O((m + n)n)$ for computing the greatest fuzzy bisimulation/simulation between two finite fuzzy interpretations in the fuzzy description logic \fALC under the G\"odel semantics, where $n$ is the number of individuals and $m$ is the number of nonzero instances of roles in the given fuzzy interpretations. They also adapted the algorithms to computing the greatest fuzzy bisimulation/simulation between fuzzy finite automata, as well as for dealing with other fuzzy description logics.
\end{itemize}

In this article, we design efficient algorithms with the complexity $O((m+n)n)$ for computing the largest crisp simulation and the largest crisp directed simulation between two finite fuzzy labeled graphs, where $n$ is the number of vertices and $m$ is the number of nonzero edges of the input fuzzy graphs. 
The motivations of this work are as follows:
\begin{itemize}
\item The research problem is worth studying. Given two finite fuzzy automata, we may be interested in checking whether one crisply simulates the other. This is related to computing the largest crisp simulation between the fuzzy automata. As another potential application, given two objects in a fuzzy linked data network, we may wonder whether one has all positive properties with a greater or equal degree than the other has. This is related to computing the largest crisp directed auto-simulation of the fuzzy network. 

\item As far as we know, there were no algorithms directly formulated for computing the largest crisp simulation and the largest crisp directed simulation between two finite fuzzy graph-based structures (such as fuzzy automata, fuzzy labeled transition systems, fuzzy Kripke models, and fuzzy interpretations in description logics). One can try to adapt the earlier mentioned algorithm with the complexity $O(n^5)$ by Stanimirovi{\'c} {\em et at.}~\cite{StanimirovicSC2019} to computing the largest crisp simulations and the largest crisp directed simulations, but the complexity order $O(n^5)$ is too high. 
\red{One can also crisp a given fuzzy labeled graph by treating a fuzzy $r$-labeled edge $\tuple{x,y}$ with a degree $d \in (0,1]$ as the set of all crisp $\tuple{r,d_i}$-labeled edges $\tuple{x,y}$, where $d_i \in (0,d]$ is any fuzzy value occurring in the specification of the input graph, and then apply one of the algorithms given in~\cite{BloomP95,HenzingerHK95,CompBSDLP} to the obtained crisp graph. The complexity of the resulting algorithm is of order $O(l(m+n)n)$, where $l$ is the number of fuzzy values occurring in the specification of the input fuzzy graph. In the worst case, $l$ can be $m+n$ and $O(l(m+n)n)$ is the same as $O((m+n)^2n)$, which is still too high.}  

\item \red{We choose to formulate our algorithms for fuzzy labeled graphs because the notion of fuzzy labeled graphs is universal. It covers fuzzy labeled transition systems, fuzzy Kripke models and fuzzy interpretations in description logic. Fuzzy automata are also a special case of fuzzy labeled graphs (see Section~\ref{sec: HFKDS}).}  
\end{itemize}

The rest of this article is structured as follows. \red{In Section~\ref{section: prel}, we define fuzzy labeled graphs, define crisp simulations and crisp directed simulations between such graphs, and present some properties.} In Section~\ref{section: comp-CS}, we present our algorithm of computing the largest crisp simulation between two finite fuzzy labeled graphs, prove it correctness and analyze its complexity. In Section~\ref{section: comp-CDS}, we extend and adapt that algorithm to computing the largest crisp directed simulation between two finite fuzzy labeled graphs. \red{In Section~\ref{sec: HFKDS}, we adapt our algorithms to computing the largest crisp simulation and the largest crisp directed simulation between two finite fuzzy automata.} We conclude the article in Section~\ref{sec: conc}. 

\section{Preliminaries}
\label{section: prel}

Fix a finite set $\SV$ of vertex labels and a finite set $\SE$ of edge labels. 
A {\em fuzzy labeled graph}, hereafter called a {\em fuzzy graph} for short, is a triple $G = \tuple{V, E, L}$, where $V$ is a set of vertices, \mbox{$E: V \times \SE \times V \to [0,1]$} is called the fuzzy set of labeled edges, and $L: V \to (\SV \to [0,1])$ is called the labeling function of vertices. 
It is {\em finite} if $V$ is finite. 
Given vertices $x,y \in V$, a vertex label $p \in \SV$ and an edge label $r \in \SE$, $L(x)(p)$ means the degree of that $p$ is a member of the label of~$x$, and $E(x,r,y)$ means the degree of that there is an edge labeled by~$r$ from $x$ to $y$. 

Given $f$ and $g$ of type \mbox{$\SV \to [0,1]$}, we write $f \leq g$ to denote that $f(p) \leq g(p)$ for all $p \in \SV$. 

Let $G = \tuple{V, E, L}$ and $G' = \tuple{V', E', L'}$ be fuzzy graphs. 
A binary relation $Z \subseteq V \times V'$ is called a {\em (crisp) simulation} between $G$ and $G'$ if the following two conditions hold (for all possible values of the free variables), where $\to$ and $\land$ denote the usual crisp logical connectives: 
\begin{eqnarray}
Z(x,x') & \to & L(x) \leq L'(x') \label{eq: CS 1} \\[1ex]
Z(x,x') \land E(x,r,y) > 0 & \to & \E y' \in V'\,(Z(y,y') \land E(x,r,y) \leq E(x',r,y')). \label{eq: CS 2}
\end{eqnarray}
{\markRed
To make it clear, as the connectives $\to$ and $\land$ are crisp, the conditions mean that:
\begin{itemize}
\item[\eqref{eq: CS 1}] if $Z(x,x')$ holds, then $L(x) \leq L'(x')$;
\item[\eqref{eq: CS 2}] if $Z(x,x')$ holds and $E(x,r,y) > 0$, then there exists $y' \in V'$ such that $Z(y,y')$ holds and $E(x,r,y) \leq E(x',r,y')$.
\end{itemize}
}

A binary relation $Z \subseteq V \times V'$ is called a {\em (crisp) directed simulation} between $G$ and $G'$ if Conditions~\eqref{eq: CS 1} and~\eqref{eq: CS 2} and the following condition hold (for all possible values of the free variables): 
\begin{eqnarray}
Z(x,x') \land E(x',r,y') > 0 & \to & \E y \in V\,(Z(y,y') \land E(x',r,y') \leq E(x,r,y)). \label{eq: CS 3}
\end{eqnarray}

\begin{remark}\ 
\begin{enumerate}
\item We allow the empty binary relation to be a simulation (respectively, directed simulation) between~$G$ and~$G'$. This is for increasing simplicity in talking about the largest simulation (respectively, directed simulation) between $G$ and~$G'$. 
\item Some works in the literature on simulations use the condition \mbox{$Z(x,x') \to L(x) = L'(x')$} instead of Condition~\eqref{eq: CS 1}. Our choice of using~\eqref{eq: CS 1} does not reduce the generality, because one can choose ${\SV}\!_2$, $L_2$ and $L'_2$ so that $L(x) = L'(x')$ over $\SV$ iff $L_2(x) \leq L'_2(x')$ over ${\SV}\!_2$ for all $\tuple{x,x'} \in V \times V'$. 
\item The definition of (crisp) directed simulation differs from the definition of (crisp) bisimulation in that it uses Condition~\eqref{eq: CS 1} instead of the condition \mbox{$Z(x,x') \to L(x) = L'(x')$}.\\ 
\myend
\end{enumerate}
\end{remark}


\begin{proposition}\label{prop: HGDFJ}
Let $G$, $G'$ and $G''$ be fuzzy graphs and let $G = \tuple{V, E, L}$. 
\begin{enumerate}
\item The relation $Z = \{\tuple{x,x} \mid x \in V\}$ is a simulation (respectively, directed simulation) between $G$ and itself.
\item If $Z_1$ is a simulation (respectively, directed simulation) between $G$ and $G'$, and $Z_2$ is a simulation (respectively, directed simulation) between $G'$ and $G''$, then $Z_1 \circ Z_2$ is a simulation (respectively, directed simulation) between $G$ and~$G''$.
\item If $\mZ$ is a set of simulations (respectively, directed simulations) between $G$ and $G'$, then $\bigcup\mZ$ is also a~simulation (respectively, directed simulation) between $G$ and $G'$.
\item Every directed simulation between $G$ and $G'$ is also a simulation between $G$ and $G'$.
\end{enumerate}   
\end{proposition}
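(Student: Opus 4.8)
The plan is to verify the defining conditions~\eqref{eq: CS 1}, \eqref{eq: CS 2} and (for the directed case)~\eqref{eq: CS 3} directly in each of the four items, using only reflexivity and transitivity of $\leq$ on $[0,1]$ and on maps of type $\SV \to [0,1]$, together with the fact that composition of relations is associative and monotone with respect to inclusion.

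For item~1, I would observe that $Z(x,x')$ forces $x = x'$, so~\eqref{eq: CS 1} holds with equality $L(x) = L(x')$; for~\eqref{eq: CS 2} and~\eqref{eq: CS 3}, given $E(x,r,y) > 0$ (resp.\ $E(x',r,y') > 0$) one takes the witness $y' \defeq y$ (resp.\ $y \defeq y'$), and the required inequality is again an equality. For item~2, writing $Z_1 \circ Z_2 = \{\tuple{x,x''} \mid \E x'\,(Z_1(x,x') \land Z_2(x',x''))\}$, I would fix such an intermediate $x'$ and chain the conditions. Condition~\eqref{eq: CS 1} follows by transitivity from $L(x) \leq L'(x') \leq L''(x'')$. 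For~\eqref{eq: CS 2}, given $E(x,r,y) > 0$, apply~\eqref{eq: CS 2} for $Z_1$ to obtain $y'$ with $Z_1(y,y')$ and $E(x,r,y) \leq E'(x',r,y')$; this inequality together with $E(x,r,y) > 0$ gives $E'(x',r,y') > 0$, so~\eqref{eq: CS 2} for $Z_2$ applies and yields $y''$ with $Z_2(y',y'')$ and $E'(x',r,y') \leq E''(x'',r,y'')$; then $\tuple{y,y''} \in Z_1 \circ Z_2$ and the inequalities compose. The directed case adds~\eqref{eq: CS 3}, treated symmetrically by applying~\eqref{eq: CS 3} first to $Z_2$ (moving backwards from $x''$ to $x'$) and then to $Z_1$.

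For item~3, all three conditions are universally quantified over pairs in the relation, so membership of $\tuple{x,x'}$ in $\bigcup\mZ$ yields membership in some $Z \in \mZ$; the witness $y'$ produced by $Z$ still lies in $\bigcup\mZ$ because $Z \subseteq \bigcup\mZ$, which is all that is needed. Item~4 is immediate from the definitions, since a directed simulation satisfies~\eqref{eq: CS 1} and~\eqref{eq: CS 2} by definition.

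I do not expect any genuine obstacle here: the argument is routine. The only place requiring a moment's care is the positivity bookkeeping in item~2 (and its analogue for~\eqref{eq: CS 3}), namely propagating the strict inequality $E(x,r,y) > 0$ along the chain so that the successor condition can legitimately be invoked for the second relation; the remaining steps are mechanical applications of transitivity.
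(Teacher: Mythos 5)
Your proof is correct and is exactly the routine componentwise verification the paper has in mind (the paper itself omits the proof, remarking only that it is straightforward). The one point that genuinely needs care --- propagating $E(x,r,y)>0$ through the inequality $E(x,r,y)\leq E'(x',r,y')$ so that Condition~\eqref{eq: CS 2} (and dually \eqref{eq: CS 3}) can be invoked for the second relation in the composition --- is handled explicitly and correctly.
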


The proof of this proposition is straightforward. 

The following corollary follows from the third assertion of Proposition~\ref{prop: HGDFJ}. 

\begin{corollary}
The largest simulation (respectively, directed simulation) between arbitrary fuzzy graphs exists.
\end{corollary}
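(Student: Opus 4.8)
The plan is to derive the corollary directly from the third assertion of Proposition~\ref{prop: HGDFJ}, together with the fact (noted in the Remark) that the empty relation is always a simulation, and likewise a directed simulation, between $G$ and $G'$. So the statement is more a bookkeeping observation than a genuine theorem, and the whole argument should be one short paragraph; there is no real obstacle.

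\noindent First I would fix arbitrary fuzzy graphs $G = \tuple{V,E,L}$ and $G' = \tuple{V',E',L'}$ and let $\mZ$ be the collection of \emph{all} binary relations $Z \subseteq V \times V'$ that are simulations between $G$ and $G'$. This $\mZ$ is a well-defined set (it is a subset of the power set of $V \times V'$), and it is nonempty because $\emptyset \in \mZ$. Applying the third assertion of Proposition~\ref{prop: HGDFJ} to $\mZ$, the relation $Z^\ast \defeq \bigcup \mZ$ is again a simulation between $G$ and $G'$, hence $Z^\ast \in \mZ$. By construction $Z \subseteq Z^\ast$ for every $Z \in \mZ$, i.e. $Z^\ast$ contains every simulation between $G$ and $G'$; so $Z^\ast$ is the largest simulation between $G$ and $G'$, which establishes its existence.

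\noindent The case of directed simulations is identical: take $\mZ$ to be the set of all directed simulations between $G$ and $G'$, which is nonempty since $\emptyset$ is a directed simulation, invoke the ``directed'' part of the third assertion of Proposition~\ref{prop: HGDFJ} to get that $\bigcup \mZ$ is a directed simulation, and conclude as before that $\bigcup \mZ$ is the largest one. No step here requires more than unfolding definitions, so I would not expect any difficulty; the only thing worth stating explicitly is the nonemptiness of $\mZ$, which is why the convention allowing the empty relation was adopted.
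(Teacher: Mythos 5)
Your proposal is correct and follows exactly the route the paper intends: the paper derives this corollary directly from the third assertion of Proposition~\ref{prop: HGDFJ} by taking the union of all simulations (respectively, directed simulations), with the convention that the empty relation counts as one. You have merely spelled out the details the paper leaves implicit.
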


A (crisp) {\em auto-simulation} (respectively, {\em directed auto-simulation}) of $G$ is a simulation (respectively, directed simulation) between $G$ and itself. The following corollary immediately follows from Proposition~\ref{prop: HGDFJ}. 

\begin{corollary}
The largest auto-simulation (respectively, directed auto-simulation) of a fuzzy graph is a pre-order.
\end{corollary}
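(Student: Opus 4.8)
The largest auto-simulation (respectively, directed auto-simulation) of a fuzzy graph is a pre-order.

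The plan is to recall that a pre-order is a binary relation that is reflexive and transitive, and then to establish each property separately using the preceding proposition applied with $G = G' = G'' $, the fuzzy graph in question. Let me write $Z_{\max}$ for the largest auto-simulation (the argument for directed auto-simulations is word-for-word identical, so I would handle both in parallel by the ``respectively'' convention). Existence of $Z_{\max}$ is already guaranteed by the corollary immediately preceding this one (itself a consequence of assertion~3 of Proposition~\ref{prop: HGDFJ}), so there is nothing to prove there; I only need to check reflexivity and transitivity of this particular relation.

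For reflexivity, I would invoke assertion~1 of Proposition~\ref{prop: HGDFJ}: the diagonal relation $\mathit{Id}_V = \{\tuple{x,x} \mid x \in V\}$ is an auto-simulation of $G$. Since $Z_{\max}$ is the largest auto-simulation, $\mathit{Id}_V \subseteq Z_{\max}$, and hence $Z_{\max}(x,x)$ holds for every $x \in V$, i.e.\ $Z_{\max}$ is reflexive. For transitivity, I would invoke assertion~2 of Proposition~\ref{prop: HGDFJ} with $G = G' = G''$: since $Z_{\max}$ is an auto-simulation of $G$, the composition $Z_{\max} \circ Z_{\max}$ is again an auto-simulation of $G$, and therefore $Z_{\max} \circ Z_{\max} \subseteq Z_{\max}$ by maximality. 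This inclusion is exactly the statement that whenever $Z_{\max}(x,y)$ and $Z_{\max}(y,z)$ hold, $Z_{\max}(x,z)$ holds — that is, $Z_{\max}$ is transitive. Combining the two properties, $Z_{\max}$ is a pre-order.

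There is essentially no obstacle here: the corollary is a routine packaging of Proposition~\ref{prop: HGDFJ}, and the only things to be careful about are (i) making clear that the composition $\circ$ in assertion~2 is ordinary relational composition so that $Z_{\max}\circ Z_{\max}\subseteq Z_{\max}$ really does unfold to transitivity, and (ii) noting that maximality is used twice — once to absorb the diagonal and once to absorb the self-composition. The directed case needs no separate treatment because Proposition~\ref{prop: HGDFJ} states assertions~1 and~2 uniformly for simulations and directed simulations.
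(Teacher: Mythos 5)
Your proof is correct and is exactly the argument the paper intends: the corollary is stated as an immediate consequence of Proposition~\ref{prop: HGDFJ}, with reflexivity following from assertion~1 plus maximality and transitivity from assertion~2 plus maximality. Nothing to add.
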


\begin{example}\label{example: JDJWO 0a}\markRed
Let $\SV = \{p\}$, $\SE = \{r\}$ and let $G$ and $G'$ be the fuzzy graphs specified and illustrated below:
\begin{itemize}
\item $G = \tuple{V, E, L}$, where $V = \{a,b,c,d\}$ and 
	\begin{itemize}
		\item $L(a)(p) = 0.8$,\ \ $L(b)(p) = 0.8$,\ \ $L(c)(p) = 0.7$,\ \ $L(d)(p) = 0.9$, 
		\item $E(a,r,b) = 0.7$,\ \ $E(b,r,c) = 0.6$,\ \ $E(b,r,d) = 0.7$,\ \ $E(c,r,d) = 0.5$,\ \ $E(d,r,b) = 0.6$, 
		\item $E(x,r,y) = 0$ for the other triples $\tuple{x,r,y}$ with $x,y \in V$; 
	\end{itemize}
\item $G' = \tuple{V', E', L'}$, where $V' = \{e,f\}$ and
	\begin{itemize}
	\item $L'(e)(p) = 0.8$,\ \ $L'(f)(p) = 0.9$,
	\item $E'(e,r,e) = 0.6$,\ \ $E'(e,r,f) = 0.7$,\ \ $E'(f,r,e) = 0.6$, 
	\item $E'(x,r,y) = 0$ for the other triples $\tuple{x,r,y}$ with $x,y \in V'$. 
	\end{itemize}
\end{itemize}

	\begin{center}
		\begin{tikzpicture}[->,>=stealth,auto]
		\node (G) {$G$};
		\node (Gp) [node distance=7cm, right of=G] {$\,G'$};
		\node (uG) [node distance=1.9cm, below of=G] {};
		\node (a) [node distance=1.5cm, left of=uG] {$a: 0.8$};
		\node (b) [node distance=1.5cm, right of=uG] {$b: 0.8$};
		\node (c) [node distance=3cm, below of=a] {$c: 0.7$};
		\node (d) [node distance=3cm, below of=b] {$d: 0.9$};
		\node (e) [node distance=1.9cm, below of=Gp] {$e: 0.8$};
		\node (f) [node distance=3cm, below of=e] {$f: 0.9$};
		\draw (a) to node{0.7} (b);
		\draw (b) to node[left,yshift=1mm]{0.6} (c);
		\draw (c) to node[below]{0.5} (d);
		\draw (b) edge [bend right=15] node[left]{0.7} (d);
		\draw (d) edge [bend right=15] node[right]{0.6} (b);
		\draw (e) edge [loop above,in=60,out=120,looseness=10] node{0.6} (e);
		\draw (e) edge [bend right=15] node[left]{0.7} (f);
		\draw (f) edge [bend right=15] node[right]{0.6} (e);
		\end{tikzpicture}
	\end{center}

It can be checked that $Z_0 = \{\tuple{b,e}$, $\tuple{c,e}$, $\tuple{d,f}\}$ is a simulation between $G$ and $G'$. 
Let $Z$ be the largest simulation between $G$ and $G'$. We show that $Z = Z_0$ by justifying that 
\[ \{\tuple{a,e}, \tuple{a,f}, \tuple{b,f}, \tuple{c,f}, \tuple{d,e}\} \cap Z = \emptyset. \] 
Observe the following. 
\begin{itemize}
\item $\tuple{d,e} \notin Z$ because $L(d) \not\leq L'(e)$.
\item $\tuple{c,f} \notin Z$ because \eqref{eq: CS 2} cannot hold for $\tuple{x,x',y} = \tuple{c,f,d}$ (since $\tuple{d,e} \notin Z$). 
\item $\tuple{b,f} \notin Z$ because \eqref{eq: CS 2} cannot hold for $\tuple{x,x',y} = \tuple{b,f,d}$. 
\item $\tuple{a,e} \notin Z$ because \eqref{eq: CS 2} cannot hold for $\tuple{x,x',y} = \tuple{a,e,b}$ (since $\tuple{b,f} \notin Z$). 
\item $\tuple{a,f} \notin Z$ because \eqref{eq: CS 2} cannot hold for $\tuple{x,x',y} = \tuple{a,f,b}$. 
\end{itemize}
Therefore, $\{\tuple{b,e}$, $\tuple{c,e}$, $\tuple{d,f}\}$ is the largest simulation between $G$ and $G'$. 
\myend
\end{example}

\begin{example}\label{example: JDJWO 0b}
Let $\SV$, $\SE$, $G$ and $G'$ be as in Example~\ref{example: JDJWO 0a}.
Now, let $Z$ be the largest directed simulation between $G$ and $G'$. We show that $Z = \emptyset$. 
Since every directed simulation between $G$ and $G'$ is also a simulation between $G$ and $G'$, by the claim of Example~\ref{example: JDJWO 0a}, $Z \subseteq \{\tuple{b,e}, \tuple{c,e}, \tuple{d,f}\}$. Hence, it suffices to show that 
\[ \{\tuple{b,e}, \tuple{c,e}, \tuple{d,f}\} \cap Z = \emptyset. \]
Observe the following. 
\begin{itemize}
\item $\tuple{c,e} \notin Z$ because \eqref{eq: CS 3} cannot hold for $\tuple{x,x',y'} = \tuple{c,e,e}$ (since $\tuple{d,e} \notin Z$). 
\item $\tuple{b,e} \notin Z$ because \eqref{eq: CS 3} cannot hold for $\tuple{x,x',y} = \tuple{b,e,e}$ (since $\{\tuple{c,e},\tuple{d,e}\} \cap Z = \emptyset$). 
\item $\tuple{d,f} \notin Z$ because \eqref{eq: CS 3} cannot hold for $\tuple{x,x',y'} = \tuple{d,f,e}$ (since $\tuple{b,e} \notin Z$). 
\end{itemize}
Therefore, $\emptyset$ is the largest directed simulation between $G$ and $G'$. 
\myend
\end{example}


\begin{example}\label{example: HJDJS}
Let $\SV = \{p\}$ and $\SE = \{r\}$. Let $G_2$ and $G_2'$ be the fuzzy graphs illustrated below and specified in a similar way as done for $G$ and $G'$ in Example~\ref{example: JDJWO 0a}. 
\begin{center}
	\begin{tikzpicture}[->,>=stealth,auto]
	\node (G) {$G_2$};
	\node (Gp) [node distance=7cm, right of=G] {$\,G_2'$};
	\node (uG) [node distance=1.9cm, below of=G] {};
	\node (a) [node distance=1.5cm, left of=uG] {$a: 0.8$};
	\node (b) [node distance=1.5cm, right of=uG] {$b: 0.6$};
	\node (c) [node distance=3cm, below of=a] {$c: 0.7$};
	\node (d) [node distance=3cm, below of=b] {$d: 0.8$};
	\node (e) [node distance=1.9cm, below of=Gp] {$e: 0.8$};
	\node (f) [node distance=3cm, below of=e] {$f: 0.9$};
	\draw (a) to node{0.8} (b);
	\draw (b) to node[left,yshift=1mm]{0.6} (c);
	\draw (c) to node[below]{0.7} (d);
	\draw (b) edge [bend right=15] node[left]{0.7} (d);
	\draw (d) edge [bend right=15] node[right]{0.7} (b);
	\draw (e) edge [loop above,in=60,out=120,looseness=10] node{0.7} (e);
	\draw (e) edge [bend right=15] node[left]{0.6} (f);
	\draw (f) edge [bend right=15] node[right]{0.7} (e);
	\end{tikzpicture}
\end{center}
It is straightforward to show that $Z = \{b,c,d\} \times \{e,f\}$ is the largest simulation (respectively, directed simulation) between $G_2$ and $G'_2$. 
\myend
\end{example}


\section{Computing Simulations}
\label{section: comp-CS}

In this section, we design an algorithm for computing the largest simulation between two given finite fuzzy graphs $G = \tuple{V, E, L}$ and $G' = \tuple{V', E', L'}$. 

For $r \in \SE$, $x,y \in V$ and $x',y' \in V'$, we denote 
\[
\begin{array}{rclrcl}
\Next_r(x) & \!\!=\!\! & \{y \in V \mid E(x,r,y) > 0\} & \qquad \Prev_r(y) & \!\!=\!\! & \{x \in V \mid E(x,r,y) > 0\} \\  
\NextP_r(x') & \!\!=\!\! & \{y' \in V' \mid E'(x',r,y') > 0\} & \PrevP_r(y') & \!\!=\!\! & \{x' \in V' \mid E'(x',r,y') > 0\}.
\end{array}
\]

\subsection{The Skeleton of the Algorithm}
\label{section: comp-CS1}

We first formulate our algorithm on an abstract level without implementation details. The aim is to facilitate understanding the skeleton of the algorithm and its correctness. Implementation details and complexity analysis will be presented in the next subsection.

It can be checked that Condition~\eqref{eq: CS 2} holds for all $r \in \SE$, $x,y \in V$ and $x' \in V'$ iff the following condition holds for all $r \in \SE$, $y \in V$ and $x' \in V'$, where the suprema are taken in the complete lattice [0,1]. 
\begin{eqnarray}\label{eq: CS 4}
\sup \{E(x,r,y) \mid x \in \Prev_r(y) \land \tuple{x,x'} \in Z\} \leq 
\sup \{E'(x',r,y') \mid y' \in \NextP_r(x') \land \tuple{y,y'} \in Z\}
\end{eqnarray}

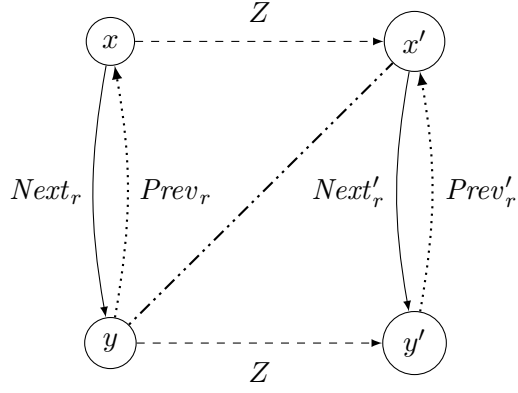
\begin{figure}
	\begin{center}
		\begin{tikzpicture}[auto,node distance=4cm]
		\tikzstyle{style}=[circle,draw]
		\node[style] (x) {$x$};
		\node[style] (y) [below of=x] {$y$};
		\node[style] (x2) [right of=x] {$x'$};
		\node[style] (y2) [right of=y] {$y'$};
		\draw[-latex] (x) edge [bend right=10, swap] node {$\Next_r\!$} (y);
		\draw[-latex,dotted, thick] (y) edge [bend right=10, swap] node {$\!\Prev_r$} (x);
		\draw[-latex] (x2) edge [bend right=10, swap] node {$\NextP_r\!$} (y2);
		\draw[-latex,dotted, thick] (y2) edge [bend right=10, swap] node {$\!\PrevP_r$} (x2);
		\draw[-latex,dashed] (x) edge node {$Z$} (x2);
		\draw[-latex,dashed] (y) edge node [swap] {$Z$} (y2);
		\draw[dash pattern={on 1pt off 2pt on 1pt off 2pt on 5pt off 2pt}, thick] (y) edge node {} (x2);
		\end{tikzpicture}
		\caption{An illustration for the algorithms given in Section~\ref{section: comp-CS}.\label{fig: dst1}}
	\end{center}
\end{figure}

Figure~\ref{fig: dst1} is helpful for illustrating Condition~\eqref{eq: CS 4}. It is worth emphasizing that this condition is involved with $\tuple{r,y,x'}$. Our algorithm constructs the largest relation $Z \subseteq V \times V'$ that satisfies Conditions~\eqref{eq: CS 1} and~\eqref{eq: CS 4} for all possible values of the free variables. 
It can be informally expressed as follows:
\begin{enumerate}
\item\label{step: HSHJH1} Initialize $Z$ by setting it to the largest subset of $V \times V'$ that satisfies Condition~\eqref{eq: CS 1}. 
\item Initialize $\removeZ$ by setting it to the empty set. This variable stands for the set of pairs removed from $Z$ and remained to be processed. So, $Z \cup \removeZ$ plays the role of a subset of a previous value of~$Z$. 

\item\label{step: HSHJH3} For each $\tuple{r,y,x'} \in \SE \times V \times V'$, move all the pairs $\tuple{x,x'} \in Z$ that falsify Condition~\eqref{eq: CS 4} to the set $\removeZ$. This can be restated as follows: for each $\tuple{r,y,x'} \in \SE \times V \times V'$ and $x \in \Prev_r(y)$, if $\tuple{x,x'} \in Z$ but the following condition does not hold, then move $\tuple{x,x'}$ from $Z$ to $\removeZ$. 
\begin{equation}\label{eq: CS 5}
E(x,r,y) \leq \sup \{E'(x',r,y') \mid y' \in \NextP_r(x') \land \tuple{y,y'} \in Z\}
\end{equation}

\item\label{step: HSHJH4} While $\removeZ \neq \emptyset$: 
	\begin{itemize} 
	\item extract $\tuple{y,y'}$ from $\removeZ$;
	\item for each $r \in \SE$, $x' \in \PrevP_r(y')$ and $x \in \Prev_r(y)$, if $\tuple{x,x'} \in Z$ but the following condition does not hold, then move $\tuple{x,x'}$ from $Z$ to $\removeZ$. 
	\begin{equation}\label{eq: CS 5b}
	E(x,r,y) \leq \sup \{E'(x',r,z') \mid z' \in \NextP_r(x') \land \tuple{y,z'} \in Z \cup \removeZ\}
	\end{equation}
	\end{itemize}
\end{enumerate}  

The steps~\ref{step: HSHJH1}--\ref{step: HSHJH3} together form the initialization, whereas the step~\ref{step: HSHJH4} is the main loop of the algorithm. The pseudocode is formulated as Algorithm~\ref{alg: ComputeSimulation} (on page~\pageref{alg: ComputeSimulation}).

\begin{figure*}[h]
\begin{procedure}[H]
\caption{ProcessPrev($r,y,x'$)\label{proc: ProcessPrev}\ \ // process $\Prev_r(y)$ with respect to $\tuple{r,y,x'}$}
$d := \sup \{E'(x',r,y') \mid y' \in \NextP_r(x') \textrm{ and } \tuple{y,y'} \in Z \cup \removeZ\}$\label{st: HANAE 1}\;
\ForEach{$x \in \Prev_r(y)$\label{st: HANAE 2}}{
	\If{$E(x,r,y) > d$ and $\tuple{x,x'} \in Z$}{ 
		move $\tuple{x,x'}$ from $Z$ to $\removeZ$\label{st: HANAE 4}\;
	}
}	
\end{procedure}

\begin{algorithm}[H]
\caption{\ComputeSimulation\label{alg: ComputeSimulation}}
\Input{finite fuzzy graphs $G = \tuple{V,E,L}$ and $G' = \tuple{V',E',L'}$.}
\Output{the largest simulation between $G$ and $G'$.}

\BlankLine
$Z := \{\tuple{x,x'} \in V \times V' \mid L(x) \leq L'(x')\}$\label{st: HGDSJ 1}\;
$\removeZ := \emptyset$\label{st: HGDSJ 2}\;
\ForEach{$\tuple{r,y,x'} \in \SE \times V \times V'$\label{st: HGDSJ 3}}{
	$\ProcessPrev(r,y,x')$\label{st: HGDSJ 4}\;
}
\BlankLine

\While{$\removeZ \neq \emptyset$\label{st: HGDSJ 5}}{
	extract $\tuple{y,y'}$ from $\removeZ$\;
	\ForEach{$r \in \SE$ and $x' \in \PrevP_r(y')$}{
		$\ProcessPrev(r,y,x')$\label{st: HGDSJ 8}\;
	}
}
\Return $Z$\;
\end{algorithm}
\end{figure*}

\begin{example}\label{example: JDJWO}
Let $\SV = \{p\}$, $\SE = \{r\}$ and let $G$ and $G'$ be the fuzzy graphs specified in Example~\ref{example: JDJWO 0a}.
Consider the execution of Algorithm~\ref{alg: ComputeSimulation} for $G$ and $G'$.  
	\begin{itemize}
		\item After executing the statement~\ref{st: HGDSJ 2}, we have that $Z = (\{a,b,c\} \times \{e,f\}) \cup \{\tuple{d,f}\}$ and $\removeZ = \emptyset$.
		\item Executing the ``foreach'' loop specified by the statements~\ref{st: HGDSJ 3} and~\ref{st: HGDSJ 4}, for the iteration involved with $\tuple{r,y,x'}$, 
		\begin{itemize}
			\item if $\tuple{r,y,x'} \in \{\tuple{r,a,e}$, $\tuple{r,a,f}$, $\tuple{r,b,e}\}$, then no changes are made, 
			\item if $\tuple{r,y,x'} = \tuple{r,b,f}$, then $\tuple{a,f}$ is moved from $Z$ to $\removeZ$, 
			\item if $\tuple{r,y,x'} \in \{\tuple{r,c,e}$, $\tuple{r,c,f}$, $\tuple{r,d,e}\}$, then no changes are made, 
			\item if $\tuple{r,y,x'} = \tuple{r,d,f}$, then $\tuple{b,f}$ and $\tuple{c,f}$ are moved from $Z$ to $\removeZ$.   
		\end{itemize}
		\item Thus, before executing the statement~\ref{st: HGDSJ 5}, we have that 
		\begin{eqnarray*}
			Z & = & (\{a,b,c\} \times \{e\}) \cup \{\tuple{d,f}\} \\
			\removeZ & = & \{\tuple{a,f},\tuple{b,f},\tuple{c,f}\}.
		\end{eqnarray*}
		
		\item Executing the ``while'' loop, for the iteration involved with $\tuple{y,y'}$ extracted from $\removeZ$, 
		\begin{itemize}
			\item if $\tuple{y,y'} = \tuple{a,f}$, then no additional changes are made, 
			\item if $\tuple{y,y'} = \tuple{b,f}$, then $\tuple{a,e}$ is moved from $Z$ to $\removeZ$, 
			\item if $\tuple{y,y'} \in \{\tuple{c,f}, \tuple{a,e}\}$, then no additional changes are made.
		\end{itemize}
		\item The algorithm returns $Z = \{\tuple{b,e},\tuple{c,e},\tuple{d,f}\}$. 
		This coincides with the claim of Example~\ref{example: JDJWO 0a} that this relation is the largest simulation between~$G$ and~$G'$.  
		\myend
	\end{itemize}
\end{example}

\begin{lemma}\label{lemma: JHDKA}
The following assertions are invariants of the ``while'' loop of Algorithm~\ref{alg: ComputeSimulation}: 
\begin{enumerate}
\item The largest simulation between $G$ and $G'$ is a subset of $Z$.
\item For all $\tuple{r,y,x'} \in \SE \times V \times V'$, if $x \in \Prev_r(y)$ and $Z[x,x']$ holds, then \eqref{eq: CS 5b} holds.
\end{enumerate}
\end{lemma}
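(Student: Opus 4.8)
The plan is to prove each assertion by induction over the execution of Algorithm~\ref{alg: ComputeSimulation}, showing that it holds when control first reaches line~\ref{st: HGDSJ 5} and is preserved by every iteration of the ``while'' loop; for the first assertion we will in fact establish the stronger statement that $Z^* \subseteq Z$ holds from line~\ref{st: HGDSJ 1} on, where $Z^*$ denotes the largest simulation between $G$ and $G'$ (it exists by Proposition~\ref{prop: HGDFJ}). Two elementary observations about \textsf{ProcessPrev} will be used throughout. First, line~\ref{st: HANAE 4} only moves a pair from $Z$ to $\removeZ$, and $Z$ and $\removeZ$ stay disjoint; hence a single call of \textsf{ProcessPrev} leaves $Z \cup \removeZ$ unchanged and only shrinks $Z$, so the value $d$ computed in line~\ref{st: HANAE 1} stays, throughout that call, equal to the current right-hand side of~\eqref{eq: CS 5b} for $\tuple{r,y,x'}$. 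Second, apart from the initial assignment in line~\ref{st: HGDSJ 1}, the only statement that modifies $Z$ is line~\ref{st: HANAE 4}, so $Z$ is non-increasing; in particular, once a pair has left $Z$ it never re-enters, and a pair can be removed from $Z$ only by line~\ref{st: HANAE 4}.

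\emph{First assertion.} Right after line~\ref{st: HGDSJ 1} we have $Z^* \subseteq Z$ because $Z^*$ satisfies~\eqref{eq: CS 1}, and by the second observation this can only be falsified by line~\ref{st: HANAE 4}. So suppose line~\ref{st: HANAE 4}, executed within a call \textsf{ProcessPrev}$(r,y,x')$, is about to move $\tuple{x,x'}$ out of $Z$, where $x \in \Prev_r(y)$, $\tuple{x,x'} \in Z$ and $E(x,r,y) > d$, and assume for contradiction that $\tuple{x,x'} \in Z^*$. From $x \in \Prev_r(y)$ we get $E(x,r,y) > 0$, so~\eqref{eq: CS 2} applied to the simulation $Z^*$ supplies $y' \in V'$ with $\tuple{y,y'} \in Z^*$ and $E(x,r,y) \le E'(x',r,y')$; in particular $E'(x',r,y') > 0$, i.e.\ $y' \in \NextP_r(x')$. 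By the induction hypothesis $Z^* \subseteq Z$ holds just before this move, hence $\tuple{y,y'} \in Z \subseteq Z \cup \removeZ$, and therefore $E'(x',r,y') \le d$ by the definition of $d$ (first observation). This yields $E(x,r,y) \le E'(x',r,y') \le d$, contradicting $E(x,r,y) > d$. Hence $\tuple{x,x'} \notin Z^*$, and $Z^* \subseteq Z$ is preserved.

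\emph{Second assertion.} For $\tuple{r,y,x'} \in \SE \times V \times V'$ let $d(r,y,x')$ denote the right-hand side of~\eqref{eq: CS 5b}; by the first observation it equals the value computed in line~\ref{st: HANAE 1} of \textsf{ProcessPrev}$(r,y,x')$. At the entry of the loop: lines~\ref{st: HGDSJ 3}--\ref{st: HGDSJ 4} call \textsf{ProcessPrev}$(r,y,x')$ for every triple, and during this ``for each'' loop $Z \cup \removeZ$ stays constant, so every $d(r,y,x')$ stays constant while $Z$ only shrinks. Thus if at line~\ref{st: HGDSJ 5} we have $x \in \Prev_r(y)$ and $\tuple{x,x'} \in Z$, then $\tuple{x,x'}$ was still in $Z$ when \textsf{ProcessPrev}$(r,y,x')$ examined $x$ (the inner ``for each'' ranges over all of $\Prev_r(y)$), so it was not moved, which forces $E(x,r,y) \le d(r,y,x')$, i.e.\ \eqref{eq: CS 5b}. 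For preservation by one iteration: extracting a pair $\tuple{y,y'}$ from $\removeZ$ removes it from $Z \cup \removeZ$ and so can only decrease $d(r,\bar y,\bar x')$ when $\bar y = y$ and $y' \in \NextP_r(\bar x')$, i.e.\ exactly when $\bar x' \in \PrevP_r(y')$; it affects no other $d$-value and leaves $Z$ untouched. For precisely those affected triples the inner loop calls \textsf{ProcessPrev}$(r,y,\bar x')$ (line~\ref{st: HGDSJ 8}), and after such a call \eqref{eq: CS 5b} holds for $\tuple{r,y,\bar x'}$ by the same reasoning as in the entry case; since the remaining \textsf{ProcessPrev} calls of this iteration keep $Z \cup \removeZ$ (hence all $d$-values) fixed and only shrink $Z$, condition~\eqref{eq: CS 5b} continues to hold for every triple at the end of the iteration.

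The only genuinely delicate step is the preservation part of the second assertion: one must pin down exactly which values $d(r,y,x')$ can drop when a pair leaves $\removeZ$ and check that the inner ``for each $r \in \SE$ and $x' \in \PrevP_r(y')$'' loop re-processes precisely those triples, noting also that the order of those \textsf{ProcessPrev} calls is immaterial since each preserves $Z \cup \removeZ$ and can only delete pairs from $Z$. Everything else reduces to routine bookkeeping built on the two observations about \textsf{ProcessPrev}.
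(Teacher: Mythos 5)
Your proof is correct and follows essentially the same approach as the paper's: induction over the execution, using the fact that the largest simulation $Z_0$ satisfies the simulation condition to show that no pair of $Z_0$ is ever moved out of $Z$, plus a bookkeeping argument tracking how the supremum in~\eqref{eq: CS 5b} can change. The paper's own proof is only a sketch (it dismisses the second assertion as ``straightforward to check''), so your detailed treatment of which triples $\tuple{r,y,x'}$ are affected when a pair is extracted from $\removeZ$, and why the inner ``foreach'' loop re-processes exactly those, is a faithful and complete filling-in of what the paper leaves implicit.
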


\begin{proof}
Let $Z_0$ be the largest simulation between $G$ and $G'$. It must satisfy Condition~\eqref{eq: CS 4}, i.e., 
\begin{eqnarray}
\sup \{E(x,r,y) \mid x \in \Prev_r(y) \land \tuple{x,x'} \in Z_0\} \leq 
   \sup \{E'(x',r,y') \mid y' \in \NextP_r(x') \land \tuple{y,y'} \in Z_0\}. \label{eq: CS 6}
\end{eqnarray}
The first assertion is an invariant because it holds after the initialization (i.e., after executing the statement~\ref{st: HGDSJ 1}) and the removal of any pair from $Z$ at a later step is justifiable by using \eqref{eq: CS 6} and the induction assumption (which means that after such a removal it still holds that $Z_0 \subseteq Z$). 
It is straightforward to check that the second assertion is also an invariant of the ``while'' loop of Algorithm~\ref{alg: ComputeSimulation}. 
\myend
\end{proof}

\begin{theorem}\label{theorem: HDKAL}
Algorithm~\ref{alg: ComputeSimulation} always terminates and returns the largest simulation between the input fuzzy graphs.
\end{theorem}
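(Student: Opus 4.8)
The plan is to establish termination by a simple potential argument and then correctness by a two-sided inclusion, leaning on Lemma~\ref{lemma: JHDKA}.

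\emph{Termination.} A pair $\tuple{x,x'}$ is transferred from $Z$ to $\removeZ$ only while it still belongs to $Z$ (statement~\ref{st: HANAE 4} of the procedure ProcessPrev), and no step of the algorithm ever re-inserts a pair into $Z$. Hence over the whole run each pair of $V \times V'$ is placed into $\removeZ$ at most once, so the ``while'' loop, which extracts one pair from $\removeZ$ per iteration, runs at most $|V| \cdot |V'|$ times; as the ``foreach'' loops and each call to ProcessPrev range over finite sets, the algorithm halts.

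\emph{Correctness.} Let $Z_0$ be the largest simulation between $G$ and $G'$ (which exists by the corollary following Proposition~\ref{prop: HGDFJ}) and let $Z$ be the relation returned by the algorithm. By the first assertion of Lemma~\ref{lemma: JHDKA}, $Z_0 \subseteq Z$. It therefore suffices to prove that $Z$ is a simulation between $G$ and $G'$, for then maximality of $Z_0$ gives $Z \subseteq Z_0$ and hence $Z = Z_0$. Condition~\eqref{eq: CS 1} holds for the returned $Z$ because statement~\ref{st: HGDSJ 1} makes the initial $Z$ equal to $\{\tuple{x,x'} \in V \times V' \mid L(x) \leq L'(x')\}$ and the algorithm only deletes pairs from $Z$. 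For Condition~\eqref{eq: CS 2}, recall from the paragraph introducing~\eqref{eq: CS 4} that, ranging over all values of the free variables, \eqref{eq: CS 2} is equivalent to~\eqref{eq: CS 4}. Upon termination $\removeZ = \emptyset$, so $Z \cup \removeZ = Z$ and \eqref{eq: CS 5b} reduces to~\eqref{eq: CS 5}; by the second assertion of Lemma~\ref{lemma: JHDKA}, which is an invariant of the ``while'' loop and hence holds at termination, for every $\tuple{r,y,x'} \in \SE \times V \times V'$ and every $x \in \Prev_r(y)$ with $\tuple{x,x'} \in Z$ we have
\[ E(x,r,y) \leq \sup\{E'(x',r,y') \mid y' \in \NextP_r(x') \land \tuple{y,y'} \in Z\}. \]
Taking the supremum of the left-hand side over all such $x$ (with the empty supremum being $0$) yields exactly \eqref{eq: CS 4} for $\tuple{r,y,x'}$; since $\tuple{r,y,x'}$ is arbitrary, $Z$ satisfies \eqref{eq: CS 4}, hence \eqref{eq: CS 2}, and together with~\eqref{eq: CS 1} this makes $Z$ a simulation. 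Combined with $Z_0 \subseteq Z$, this gives $Z = Z_0$, as required.

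The only place that needs care --- and the natural candidate for the main obstacle --- is verifying that the second assertion of Lemma~\ref{lemma: JHDKA} really holds on entry to the ``while'' loop, i.e.\ that the initialization ``foreach'' loop over $\SE \times V \times V'$ already establishes it. The point there is that $Z \cup \removeZ$ stays constant throughout the initialization (pairs only move from $Z$ into $\removeZ$), so the threshold $d$ that each call to ProcessPrev computes during initialization coincides with the one induced by the post-initialization value of $Z \cup \removeZ$; consequently no violation of \eqref{eq: CS 5b} survives the initialization. Everything else in the proof is routine bookkeeping layered on Lemma~\ref{lemma: JHDKA}.
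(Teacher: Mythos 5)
Your proof is correct and follows essentially the same route as the paper's: termination because pairs are never re-inserted into $Z$ so $\removeZ$ receives each pair at most once, and correctness by combining the two invariants of Lemma~\ref{lemma: JHDKA} with the observation that $\removeZ=\emptyset$ at exit reduces \eqref{eq: CS 5b} to \eqref{eq: CS 4}. Your closing remark about the initialization establishing the invariant is really a detail belonging to the proof of Lemma~\ref{lemma: JHDKA} (which the paper leaves as ``straightforward''), not to the theorem itself.
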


\begin{proof}
After the initialization at the statement~\ref{st: HGDSJ 1}, no pairs are added to $Z$. The pairs added to $\removeZ$ are the ones extracted from $Z$. Hence, the total number of pairs added to $\removeZ$ is bounded. Each iteration of the ``while'' loop of Algorithm~\ref{alg: ComputeSimulation} extracts one pair from $\removeZ$. Hence, the loop and the algorithm itself always terminate. At the end, we have that $\removeZ = \emptyset$. Hence, by the second assertion of Lemma~\ref{lemma: JHDKA}, the resulting $Z$ satisfies Condition~\eqref{eq: CS 4} (for all possible values of the free variables). Due to the initialization, it also satisfies Condition~\eqref{eq: CS 1}. Thus, it is a simulation between $G$ and $G'$. By the first assertion of Lemma~\ref{lemma: JHDKA}, it follows that the returned $Z$ is the largest simulation between $G$ and $G'$.
\myend
\end{proof}

\subsection{Implementation Details and Complexity Analysis}
\label{section: comp-CS2}

By $|E|$ we denote $|\{\tuple{x,r,y} \in V \times \SE \times V : E(x,r,y) > 0\}|$, and similarly for $|E'|$. 
Let $n = |V| + |V'|$ and $m = |E| + |E'|$. Assume that $|\SV|$ and $|\SE|$ are constants. 
In this subsection, we give details on how to implement Algorithm~\ref{alg: ComputeSimulation} so that its complexity is of order $O((m+n)n)$. In particular, we provide an improved algorithm together with a complexity analysis. 

For simplicity, without loss of generality we assume that $V = 0\,..\,(|V|-1)$, $V' = 0..(|V'|-1)$ and $\SE = 0..(|\SE|-1)$. In practice, the input data for $G$ and $G'$ can be given in a friendly format using names for vertices and edge labels, but the conversion (using maps) for the input as well as for the output can be done in time $O((m+n)\log{n})$, which is within $O((m+n)n)$. 
 
From the input data we construct arrays $\NextP$, $\Prev$, $\PrevP$, $E$ and $E'$ such that, for $r \in \SE$, $x,y \in V$ and $x',y' \in V'$, $\NextP[r,x']$ is a vector representing $\NextP_r(x')$, $E[x,r,y] = E(x,r,y)$, and similarly for $\Prev[r,y]$, $\PrevP[r,y']$ and $E'[x',r,y']$. The construction can be done in time $O(n^2)$. 

We use an array $\EdgeIDp : V' \times \SE \times V' \to 0..(|E'|-1)$ for identifying nonzero edges of $G'$ so that, if $\tuple{x'_1,r_1,y'_1}$ and $\tuple{x'_2,r_2,y'_2}$ are different triples from $V' \times \SE \times V'$ such that $E'[x'_1,r_1,y'_1] > 0$ and $E'[x'_2,r_2,y'_2] > 0$, then $\EdgeIDp[x'_1,r_1,y'_1] \neq \EdgeIDp[x'_2,r_2,y'_2]$. 
Such an array can be constructed in time $O(n^2 + m\log{m})$. After having been constructed, $\EdgeIDp$ will never change. 

For the improved algorithm, we implement $Z$ as an array of type $V \times V' \to \Bool$ and $\removeZ$ as a queue. 
We also use the variables $\rcNextP$, $\rcNextElemP$ and $\rcPrev$ for representing arrays described below.

\begin{itemize}
\item $\rcNextP$ is an array such that, for $\tuple{r,y,x'} \in \SE \times V \times V'$, $\rcNextP[r,y,x']$ is a doubly linked list consisting of the vertices $y' \in \NextP_r(x')$ such that $Z[y,y'] \lor (\tuple{y,y'} \in \removeZ)$ holds. The list is sorted in ascending order with respect to $E'[x',r,y']$. The prefix ``rc'' stands for ``remaining for consideration''. The vertex contained in an element of $\rcNextP[r,y,x']$ is called the key of that element. 

\item $\rcNextElemP$ is an array with indices from $V \times (0..(|E'|-1))$ such that, if $Z[y,y'] \lor (\tuple{y,y'} \in \removeZ)$, $r \in \SE$, $x' \in \PrevP_r(y')$ and $e' = \EdgeIDp[x',r,y']$, then $\rcNextElemP[y,e']$ is (a reference to) the element of the doubly linked list $\rcNextP[r,y,x']$ whose key is~$y'$. 

\item $\rcPrev$ is an array such that, for $\tuple{r,y,x'} \in \SE \times V \times V'$, $\rcPrev[r,y,x']$ is a vector consisting of the vertices $x \in \Prev_r(y)$ such that the following condition (which is a reformulation of~\eqref{eq: CS 5b}) holds 
\begin{equation}
	E[x,r,y] \leq \sup \{E'[x',r,y'] \mid y' \in \NextP_r(x') \land (Z[y,y'] \lor (\tuple{y,y'} \in \removeZ))\} \label{eq: CS 7}
\end{equation}
and the vector is sorted in ascending order with respect to $E[x,r,y]$. 
\end{itemize}

\begin{figure*}
\begin{procedure}[H]
\caption{UpdateRcPrev($r,y,x'$)\label{proc: UpdateRcPrev}\ \ // for updating $\rcPrev$}

\lIf{$\rcNextP[r,y,x']$ is empty}{d := 0\label{st: HGNSA 1}}
\Else{
	let $y'$ be the key of the last element of $\rcNextP[r,y,x']$\;
	$d := E'[x',r,y']$\label{st: HGNSA 4}\;
}

\While{$\rcPrev[r,y,x']$ is not empty\label{st: HGNSA 5}}{
	let $x$ be the last element of $\rcPrev[r,y,x']$\label{st: HGNSA 6}\; 
	\lIf{$E[x,r,y] \leq d$}{\Break} 
	delete from $\rcPrev[r,y,x']$ the last element\;  
	\If{$Z[x,x']$}{
		$Z[x,x'] := \false$\;
		add $\tuple{x,x'}$ to $\removeZ$\label{st: HGNSA 11}\;
	}
}
\end{procedure}

\begin{procedure}[H]
\caption{Initialize()}
construct arrays $\NextP$, $\Prev$, $\PrevP$, $E$, $E'$ and $\EdgeIDp$ according to their specifications\label{st: HRLSN 1}\; 

\lForEach{$\tuple{x,x'} \in V \times V'$}{$Z[x,x'] := L(x) \leq L'(x')$\label{st: HRLSN 2}}
set $\removeZ$ to the empty queue\label{st: HRLSN 3}\;

\ForEach{$\tuple{r,x'} \in \SE \times V'$\label{st: HRLSN 4}}{
	sort $\NextP[r,x']$ in ascending order w.r.t.~$E'[x',r,y']$ for $y' \in \NextP[r,x']$\label{st: HRLSN 5}\;
}  
\ForEach{$\tuple{r,y} \in \SE \times V$\label{st: HRLSN 6}}{
	sort $\Prev[r,y]$ in ascending order w.r.t.~$E[x,r,y]$ for $x \in \Prev[r,y]$\label{st: HRLSN 7}\;
}  
\lForEach{$y \in V$ and $e' \in 0..(|E'|-1)$}{$\rcNextElemP[y,e'] := \Null$\label{st: HRLSN 8}} 

\ForEach{$\tuple{r,y,x'} \in \SE \times V \times V'$\label{st: HRLSN 9}}{
	construct a vector $\rcPrev[r,y,x']$ of all $x \in \Prev[r,y]$ such that $Z[x,x']$ holds, preserving the order\label{st: HRLSN 10}\;
	construct a doubly linked list $\rcNextP[r,y,x']$ whose elements' keys are all $y' \in \NextP[r,x']$ such that $Z[y,y']$ holds, preserving the order\label{st: HRLSN 11}\;
	\ForEach{element $u$ of $\rcNextP[r,y,x']$\label{st: HRLSN 12}}{
		let $y'$ be the key of $u$ and let $e' = \EdgeIDp[x',r,y']$\; 
		$\rcNextElemP[y,e'] := u$\label{st: HRLSN 14}\;
	}
} 

\ForEach{$\tuple{r,y,x'} \in \SE \times V \times V'$\label{st: HRLSN 15}}{
	$\UpdateRcPrev(r,y,x')$\label{st: HRLSN 16}\;
}
\end{procedure}

\begin{algorithm}[H]
\caption{\ComputeSimulationEfficiently\label{alg: ComputeSimulationEfficiently}}
\Input{finite fuzzy graphs $G = \tuple{V,E,L}$ and $G' = \tuple{V',E',L'}$.}
\Output{the largest simulation between $G$ and $G'$.}
\BlankLine
$\Initialize()$\;
\While{$\removeZ$ is not empty\label{st: JDNAA 2}}{
	extract $\tuple{y,y'}$ from $\removeZ$\label{st: JDNAA 3}\;
	\ForEach{$r \in \SE$ and $x' \in \PrevP[r,y']$\label{st: JDNAA 4}}{
		$e' := \EdgeIDp[x',r,y']$\label{st: JDNAA 5}\; 
		$u := \rcNextElemP[y,e']$\; 
		delete $u$ from $\rcNextP[r,y,x']$\;
		$\rcNextElemP[y,e'] := \Null$\label{st: JDNAA 8}\;
		$\UpdateRcPrev(r,y,x')$\label{st: JDNAA 9}\;
	}
}
\Return the relation corresponding to $Z$\label{st: JDNAA 10}\;
\end{algorithm}
\end{figure*}

Algorithm~\ref{alg: ComputeSimulationEfficiently} (on page~\pageref{alg: ComputeSimulationEfficiently}) is a reformulation of Algorithm~\ref{alg: ComputeSimulation} using the above described data structures. It is given together with its subroutines $\UpdateRcPrev$ (for updating $\rcPrev$) and $\Initialize$ on page~\pageref{alg: ComputeSimulationEfficiently}. The procedure $\UpdateRcPrev$ corresponds to the procedure $\ProcessPrev$ used in Algorithm~\ref{alg: ComputeSimulation}. 
Roughly speaking, the reformulation relies on the following two points.  
\begin{itemize}
\item Computing $d$ specified by the statement~\ref{st: HANAE 1} of the procedure $\ProcessPrev(r,y,x')$ used in Algorithm~\ref{alg: ComputeSimulation} is implemented in Algorithm~\ref{alg: ComputeSimulationEfficiently} by the statements~\ref{st: HGNSA 1}-\ref{st: HGNSA 4} of the procedure $\UpdateRcPrev(r,y,x')$ so that its cost is bounded by a constant. It is done by using the sorted list $\rcNextP[r,y,x']$. This list is doubly linked so that deleting any element from the list can be done in constant time. Such a deletion is triggered when a~pair $\tuple{y,y'}$ is extracted from $\removeZ$ by the statement~\ref{st: JDNAA 3} of Algorithm~\ref{alg: ComputeSimulationEfficiently}. In such a situation, to guarantee that $\rcNextP$ satisfies its specification, we need to delete the element~$u$ whose key is~$y'$ from the list $\rcNextP[r,y,x']$. A reference to $u$ is kept in $\rcNextElemP[y,e']$, where $e' = \EdgeIDp[x',r,y']$. The deletion is done by the statements~\ref{st: JDNAA 5}-\ref{st: JDNAA 8} of Algorithm~\ref{alg: ComputeSimulationEfficiently}. This shows the key role of the arrays $\rcNextP$, $\rcNextElemP$ and $\EdgeIDp$.  

\item When dealing with a triple $\tuple{r,y,x'} \in \SE \times V \times V'$, moving pairs $\tuple{x,x'}$ from $Z$ to $\removeZ$ as done in the loop specified by the statements \ref{st: HANAE 2}-\ref{st: HANAE 4} of the procedure $\ProcessPrev(r,y,x')$ (used in Algorithm~\ref{alg: ComputeSimulation}) is implemented in Algorithm~\ref{alg: ComputeSimulationEfficiently} by the statements~\ref{st: HGNSA 5}-\ref{st: HGNSA 11} of the procedure $\UpdateRcPrev(r,y,x')$. Once again, the task is realized by using a sorted collection ($\rcPrev[r,y,x']$). The cost related to each $x \in \Prev_r(y)$ of the task is bounded by a constant. This is the point of the use of the array $\rcPrev$.  
\end{itemize} 

We have implemented Algorithm~\ref{alg: ComputeSimulationEfficiently} in C++ and shared the code~\cite{compCSfFS-impl}. The reader can experiment with the algorithm by adding statements to the code for displaying the data he or she wants to watch.


\begin{lemma}\label{lemma: JHDKA 2}
The following assertions are invariants of the ``while'' loop of Algorithm~\ref{alg: ComputeSimulationEfficiently}: 
\begin{enumerate}
\item The data structures $\rcNextP$, $\rcNextElemP$ and $\rcPrev$ satisfy their specifications. 
\item The largest simulation between $G$ and $G'$ is a subset of $\{\tuple{x,x'} \in V \times V' \mid Z[x,x']\}$.
\item For all $\tuple{r,y,x'} \in \SE \times V \times V'$, if $x \in \Prev_r(y)$ and $Z[x,x']$ holds, then $x \in \rcPrev[r,y,x']$. 
\end{enumerate}
\end{lemma}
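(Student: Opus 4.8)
The strategy is the standard one for loop-invariant lemmas: verify that each of the three assertions holds immediately after \texttt{Initialize()} completes, and then show that executing one iteration of the ``while'' loop preserves all three (using the conjunction of all three as the induction hypothesis, since they are mutually dependent). I would treat the three assertions together because the correctness of the updates to $\rcPrev$ in \texttt{UpdateRcPrev} relies on $\rcNextP$ being accurate, and the removal of pairs from $Z$ (assertion~2) is justified via Condition~\eqref{eq: CS 6} exactly as in Lemma~\ref{lemma: JHDKA}.

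\emph{Base case (after initialization).} For assertion~1, I would trace through \texttt{Initialize()}: the sorts at statements~\ref{st: HRLSN 5} and~\ref{st: HRLSN 7} establish the ordering requirements; the loop at statements~\ref{st: HRLSN 9}--\ref{st: HRLSN 14} builds $\rcPrev[r,y,x']$ as the sub-vector of $\Prev[r,y]$ on which $Z[x,x']$ holds (so \eqref{eq: CS 7} need not yet hold — but that is fine, because the final pass over all triples at statements~\ref{st: HRLSN 15}--\ref{st: HRLSN 16} calls \texttt{UpdateRcPrev}, which trims exactly those trailing $x$ violating \eqref{eq: CS 7}), builds $\rcNextP[r,y,x']$ as the keys $y' \in \NextP[r,x']$ with $Z[y,y']$ (matching the spec since $\removeZ$ is still empty), and sets $\rcNextElemP$ consistently. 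One has to note that the calls to \texttt{UpdateRcPrev} in the last loop may push pairs into $\removeZ$, but they do not touch $\rcNextP$ or $\rcNextElemP$, and $\rcNextP$'s spec refers to $Z[y,y'] \lor (\tuple{y,y'} \in \removeZ)$, a disjunction unaffected by moving a pair from the first disjunct to the second; so assertion~1 survives. Assertion~2 holds after statement~\ref{st: HRLSN 2} by the same argument as in Lemma~\ref{lemma: JHDKA} (initialization enforces \eqref{eq: CS 1}), and the subsequent removals in \texttt{UpdateRcPrev} are justified by \eqref{eq: CS 6}. Assertion~3 is immediate from how $\rcPrev$ is constructed and then only shrunk by removing $x$ for which $Z[x,x']$ is simultaneously set to \false.

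\emph{Inductive step.} One iteration extracts $\tuple{y,y'}$ from $\removeZ$ and, for each $r$ and $x' \in \PrevP_r(y')$, deletes the element $u$ with key $y'$ from $\rcNextP[r,y,x']$, nulls $\rcNextElemP[y,e']$, and calls \texttt{UpdateRcPrev}$(r,y,x')$. For assertion~1: after the extraction, $\tuple{y,y'}$ is no longer in $\removeZ$ and (since it was put there when $Z[y,y']$ was set \false) $Z[y,y']$ is \false, so $y'$ must be removed from $\rcNextP[r,y,x']$ for every $x'$ with $y' \in \NextP_r(x')$, i.e., every $x' \in \PrevP_r(y')$ — which is precisely what the loop does, using $\rcNextElemP[y,e']$ (correct by the invariant and the $\EdgeIDp$ property) to find $u$ in constant time; the remaining elements keep their sorted order; $\rcNextElemP$ is restored to \Null for the deleted key and untouched elsewhere. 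Then \texttt{UpdateRcPrev} recomputes $d$ from the (now-updated) $\rcNextP$ and deletes from the tail of $\rcPrev[r,y,x']$ every $x$ with $E[x,r,y] > d$, i.e., exactly those newly violating \eqref{eq: CS 7}; since $\rcNextP$ is sorted, its last key gives the supremum $d$, so this restores \eqref{eq: CS 7} and hence assertion~1 for $\rcPrev$. For assertion~2: any $\tuple{x,x'}$ removed here satisfies $E[x,r,y] > \sup\{E'[x',r,z'] \mid z' \in \NextP_r(x'),\ Z[y,z'] \lor (\tuple{y,z'} \in \removeZ)\}$, which contains the set in \eqref{eq: CS 6} for $Z_0$ (since $Z_0 \subseteq Z$ by the induction hypothesis and $Z_0$ is disjoint from nothing relevant), so \eqref{eq: CS 6} forces $\tuple{x,x'} \notin Z_0$. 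For assertion~3: $\rcPrev$ only loses entries $x$ for which $Z[x,x']$ is set \false in the same step, and $Z$ is never enlarged, so the implication is maintained.

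\emph{Main obstacle.} The delicate point is the bookkeeping in assertion~1: one must argue carefully that after extracting $\tuple{y,y'}$ the \emph{only} element needing deletion from any $\rcNextP$ list is the one keyed by $y'$ in the lists $\rcNextP[r,y,x']$ with $x' \in \PrevP_r(y')$ — nothing with a different first index $y$, nothing with a key $\neq y'$ — and that after \texttt{UpdateRcPrev} moves further pairs into $\removeZ$, $\rcNextP$ still satisfies its spec because those moves only shift membership across the $\lor$ in the spec. Getting this ``frame condition'' exactly right, together with the $O(1)$-time justification that $\rcNextElemP$ and $\EdgeIDp$ really do pin down $u$ uniquely, is where the proof needs the most care; the rest is routine.
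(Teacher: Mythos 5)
Your proposal is correct and follows essentially the same route as the paper: induction on the execution steps, with assertion~1 established by tracing the data-structure updates, assertion~2 by combining the first invariant (which identifies $d$ with the supremum in the specification of $\rcNextP$), the induction hypothesis $Z_0 \subseteq Z$, and Condition~\eqref{eq: CS 6} to show that any pair removed in $\UpdateRcPrev$ cannot belong to the largest simulation, and assertion~3 by noting that $\rcPrev$ only loses entries whose $Z$-flag is simultaneously set to $\false$. The only difference is one of emphasis: you spell out the bookkeeping for assertion~1 (which the paper dismisses as easy to check), while the paper's detailed work is concentrated on assertion~2, where your argument coincides with it.
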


\begin{proof}
It is easy to check that the first assertion is an invariant of the ``while'' loop of Algorithm~\ref{alg: ComputeSimulationEfficiently}. 

Consider the second assertion and let $Z_0$ be the largest simulation between $G$ and $G'$. Recall that $Z_0$ satisfies~\eqref{eq: CS 6}. We prove that $Z_0 \subseteq \{\tuple{x,x'} \in V \times V' \mid Z[x,x']\}$ by induction on the step number during the execution of Algorithm~\ref{alg: ComputeSimulationEfficiently}. The base case occurs after the initialization of $Z$ by the statement~\ref{st: HRLSN 2} of the procedure $\Initialize$. The induction hypothesis clearly holds for the base case. For the induction step, assume that the induction hypothesis holds before calling the procedure $\UpdateRcPrev(r,y,x')$. We only need to prove that it still holds after executing that procedure. Let $d$ be the value computed by the statements \ref{st: HGNSA 1}-\ref{st: HGNSA 4} of $\UpdateRcPrev(r,y,x')$. 
By the first invariant, we have that 
\[
d = \sup \{E'(x',r,y') \mid y' \in \NextP_r(x') \land (Z[y,y'] \lor (\tuple{y,y'} \in \removeZ))\}.	 
\]
By the induction assumption, $Z_0 \subseteq \{\tuple{x,x'} \in V \times V' \mid Z[x,x']\}$. Hence, 
\[
d \geq \sup \{E'(x',r,y') \mid y' \in \NextP_r(x') \land \tuple{y,y'} \in Z_0\}.	 
\]
By~\eqref{eq: CS 6}, it follows that 
\[
d \geq \sup \{E(x,r,y) \mid x \in \Prev_r(y) \land \tuple{x,x'} \in Z_0\}.
\]
Therefore, after executing the statements \ref{st: HGNSA 5}-\ref{st: HGNSA 11} of $\UpdateRcPrev(r,y,x')$, it still holds that $Z_0 \subseteq \{\tuple{x,x'} \in V \times V' \mid Z[x,x']\}$. This completes the proof of that the second assertion is an invariant of the ``while'' loop of Algorithm~\ref{alg: ComputeSimulationEfficiently}. 

The third assertion is also an invariant of the loop because it holds after executing the statements \ref{st: HRLSN 1}-\ref{st: HRLSN 14} of the procedure $\Initialize$ and, whenever $x$ is deleted from $\rcPrev[r,y,x']$, $Z[x,x']$ is set to $\false$. 
\myend
\end{proof}

We give below the main theorem of this section. 

\begin{theorem}\label{theorem: JHLWA}
Algorithm~\ref{alg: ComputeSimulationEfficiently} runs in time $O((m+n)n)$ and returns the largest simulation between given finite fuzzy graphs $G = \tuple{V,E,L}$ and $G' = \tuple{V',E',L'}$, where $n = |V| + |V'|$ and $m = |E| + |E'|$. 
\end{theorem}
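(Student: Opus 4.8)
The plan is to split the statement into two parts, correctness and running time, and to lean on the skeleton algorithm for correctness. For correctness, I would argue that Algorithm~\ref{alg: ComputeSimulationEfficiently} is a faithful implementation of Algorithm~\ref{alg: ComputeSimulation}, so that Theorem~\ref{theorem: HDKAL} applies. Concretely, using the invariants of Lemma~\ref{lemma: JHDKA 2}: the first invariant says $\rcNextP$, $\rcNextElemP$, $\rcPrev$ always satisfy their specifications; consequently the value $d$ computed in lines~\ref{st: HGNSA 1}--\ref{st: HGNSA 4} of $\UpdateRcPrev(r,y,x')$ equals the supremum in line~\ref{st: HANAE 1} of $\ProcessPrev(r,y,x')$, and by the third invariant the elements of $\rcPrev[r,y,x']$ are exactly the $x \in \Prev_r(y)$ with $Z[x,x']$ (ordered by $E[x,r,y]$), so lines~\ref{st: HGNSA 5}--\ref{st: HGNSA 11} move from $Z$ to $\removeZ$ exactly the pairs that $\ProcessPrev$ would. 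Hence the sequences of values of $Z$ and $\removeZ$ produced by the two algorithms coincide (up to the queue discipline for $\removeZ$, which is immaterial since the skeleton only requires processing all pairs that ever enter $\removeZ$), and upon termination the relation corresponding to $Z$ is the largest simulation between $G$ and $G'$ by Theorem~\ref{theorem: HDKAL} and the second invariant of Lemma~\ref{lemma: JHDKA 2}. Termination itself follows as in Theorem~\ref{theorem: HDKAL}: no pair is ever re-added to $Z$, so at most $|V|\cdot|V'| \le n^2$ pairs ever enter the queue $\removeZ$, and each iteration of the ``while'' loop removes one.

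For the complexity bound I would do an amortized accounting, treating $|\SV|$ and $|\SE|$ as constants. The $\Initialize$ procedure: constructing the arrays $\NextP$, $\Prev$, $\PrevP$, $E$, $E'$ costs $O(n^2)$ and constructing $\EdgeIDp$ costs $O(n^2 + m\log m)$ as already noted; initializing $Z$ costs $O(n^2)$ (each test $L(x) \le L'(x')$ is $O(|\SV|) = O(1)$); the sorting of all the adjacency lists $\NextP[r,x']$ and $\Prev[r,y]$ costs $O(\sum \deg \log \deg) = O(m\log m)$ in total; initializing $\rcNextElemP$ costs $O(n\cdot|E'|) = O(nm)$; the loop at line~\ref{st: HRLSN 9} builds, for each of the $O(n^2)$ triples $\tuple{r,y,x'}$, the vector $\rcPrev[r,y,x']$ and the list $\rcNextP[r,y,x']$ by a single pass over $\Prev[r,y]$ and $\NextP[r,x']$, costing $O(|\Prev_r(y)| + |\NextP_r(x')|)$; summing over $x'$ and over $y$ this is $O(n\cdot m)$, i.e.\ $O((m+n)n)$. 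The final loop at line~\ref{st: HRLSN 15} calls $\UpdateRcPrev$ on all $O(n^2)$ triples; I account for its cost together with the ``while''-loop calls below.

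The heart of the analysis is the amortized cost of $\UpdateRcPrev$. I would split its cost into (i) a constant part — lines~\ref{st: HGNSA 1}--\ref{st: HGNSA 4} and the loop-exit test — and (ii) the work inside the ``while'' loop at line~\ref{st: HGNSA 5} that actually deletes an element from $\rcPrev[r,y,x']$. For (ii): each element $x$ of $\rcPrev[r,y,x']$ is deleted at most once over the whole run (since $\rcPrev$ only shrinks), and there are $\sum_{r,y,x'} |\rcPrev[r,y,x']| \le |\SE|\cdot|V'|\cdot\sum_{r,y}|\Prev_r(y)| = O(n\cdot m)$ such elements in total, each deletion costing $O(1)$ including possibly setting $Z[x,x'] := \false$ and enqueuing $\tuple{x,x'}$; so (ii) contributes $O((m+n)n)$ overall. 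For (i): $\UpdateRcPrev$ is invoked $O(n^2)$ times from line~\ref{st: HRLSN 15}, and from line~\ref{st: JDNAA 9} once per iteration of the inner ``foreach'' at line~\ref{st: JDNAA 4}; since each pair $\tuple{y,y'}$ is extracted from $\removeZ$ at most once and the inner loop then ranges over $x' \in \PrevP_r(y')$ for each $r$, the total number of such invocations is $\sum_{\tuple{y,y'}} |\SE|\cdot|\PrevP_r(y')| = O(n\cdot m)$; and separately, lines~\ref{st: JDNAA 5}--\ref{st: JDNAA 8} (the list deletion via $\rcNextElemP$ and $\EdgeIDp$) cost $O(1)$ per inner iteration. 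Adding $O(n^2)$ for the initial sweep, the ``while'' loop plus the line~\ref{st: HRLSN 15} sweep cost $O((m+n)n)$ in total. Combining with the $O((m+n)n)$ cost of $\Initialize$ and the $O(n^2)$ cost of producing the output relation at line~\ref{st: JDNAA 10} gives the claimed $O((m+n)n)$ bound.

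The main obstacle I expect is the careful bookkeeping in the amortized argument — in particular making precise that (a) $\rcNextP[r,y,x']$ stays sorted and each of its elements is deleted at most once (so the cost of keeping $d$ current is genuinely $O(1)$ per call), and (b) the reference stored in $\rcNextElemP[y,e']$ always points to the correct list element when a pair is extracted from $\removeZ$, which is exactly the content of the first invariant of Lemma~\ref{lemma: JHDKA 2} and must be invoked, not re-proved. Everything else is routine summation, but one must be scrupulous that constants $|\SV|, |\SE|$ are absorbed and that the $\rcNextElemP$ initialization $O(nm)$ and the array constructions $O(n^2)$ both fit under $O((m+n)n)$.
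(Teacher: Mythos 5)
Your proposal is correct and follows essentially the same route as the paper: the same amortized accounting for the running time (charging each deletion from $\rcPrev[r,y,x']$ to the edge $\tuple{x,r,y}$ and $x'$, and each constant-cost invocation of $\UpdateRcPrev$ to the pair extracted from $\removeZ$ together with $x'\in\PrevP_r(y')$), and correctness via the invariants of Lemma~\ref{lemma: JHDKA 2} (the paper concludes directly from assertions~1 and~3 that the final $Z$ satisfies~\eqref{eq: CS 4}, rather than phrasing it as a step-by-step simulation of Algorithm~\ref{alg: ComputeSimulation}). One small imprecision: the third invariant gives only the inclusion that every $x\in\Prev_r(y)$ with $Z[x,x']$ lies in $\rcPrev[r,y,x']$, not that the list equals this set --- the list may retain vertices whose pair has already left $Z$, which is exactly why $\UpdateRcPrev$ guards the move with the test $Z[x,x']$; since the inclusion is the direction your argument actually needs, this does not affect the proof.
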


\begin{proof}
First, consider the complexity of the procedure $\Initialize$.
\begin{itemize}
\item As mentioned earlier, constructing the arrays $\NextP$, $\Prev$, $\PrevP$, $E$ and $E'$ can be done in time $O(n^2)$, whereas constructing the array $\EdgeIDp$ can be done in time $O(n^2 + m\log{m})$. Hence, the statement~\ref{st: HRLSN 1} (of the procedure) runs in time $O(n^2 + m\log{m})$. 
\item The statement~\ref{st: HRLSN 2} runs in time $O(n^2)$. The statement~\ref{st: HRLSN 3} runs in constant time.
\item The loops specified by the statements \ref{st: HRLSN 4}-\ref{st: HRLSN 7} run in time $O(n + m\log{m})$.  
\item The loop specified by the statement~\ref{st: HRLSN 8} runs in time $O(nm)$.  
\item The loop specified by the statements \ref{st: HRLSN 9}-\ref{st: HRLSN 14} runs in time $O(n(n+m))$.  
\item The loop specified by the statements~\ref{st: HRLSN 15} and~\ref{st: HRLSN 16} runs in time $O(n(n+m))$.  
\end{itemize}
Summing up, the procedure $\Initialize$ runs in time $O((m+n)n)$. 

Now consider the complexity of the ``while'' loop of Algorithm~\ref{alg: ComputeSimulationEfficiently}.
\begin{itemize}
\item The statements~\ref{st: JDNAA 3} and~\ref{st: JDNAA 5}-\ref{st: JDNAA 8} of Algorithm~\ref{alg: ComputeSimulationEfficiently} as well as the statements~\ref{st: HGNSA 1}-\ref{st: HGNSA 4} and~\ref{st: HGNSA 6}-\ref{st: HGNSA 11} of the procedure $\UpdateRcPrev(r,y,x')$ run in constant time. 
\item Each iteration of the ``foreach'' loop of Algorithm~\ref{alg: ComputeSimulationEfficiently} is involved with a pair $\tuple{y,y'}$ extracted from $\removeZ$ and a vertex $x' \in \PrevP_r(y')$. 

\item Each iteration of the ``while'' loop of the procedure $\UpdateRcPrev(r,y,x')$ is involved with the triple $\tuple{r,y,x'} \in \SE \times V \times V'$ and a vertex $x \in \Prev_r(y)$. 
If $x$ is deleted from $\rcPrev[r,y,x']$, then we ascribe the cost of the involved iteration to the edge $\tuple{x,r,y}$ and~$x'$. Otherwise, the involved iteration is the last one of the loop and we ascribe its cost to the pair $\tuple{y,y'}$ and $x' \in \PrevP_r(y')$ which together identify the iteration of the ``foreach'' loop of Algorithm~\ref{alg: ComputeSimulationEfficiently} that calls $\UpdateRcPrev(r,y,x')$.  
\end{itemize}
Thus, the ``while'' loop of Algorithm~\ref{alg: ComputeSimulationEfficiently} runs in time $O((m+n)n)$. 

The ``return'' statement of Algorithm~\ref{alg: ComputeSimulationEfficiently} runs in time $O(n^2)$. 

Summing up, Algorithm~\ref{alg: ComputeSimulationEfficiently} runs in time $O((m+n)n)$.   

At the end of the execution of Algorithm~\ref{alg: ComputeSimulationEfficiently}, $\removeZ$ is empty. 
By the first and third assertions of Lemma~\ref{lemma: JHDKA 2} (see, among others, \eqref{eq: CS 7}), it follows that $Z$ satisfies Condition~\eqref{eq: CS 4}. By the initialization, $Z$ also satisfies Condition~\eqref{eq: CS 1}. Hence, $Z$ is a simulation between $G$ and $G'$. Together with the second assertion of Lemma~\ref{lemma: JHDKA 2}, this implies that the relation returned by Algorithm~\ref{alg: ComputeSimulationEfficiently} is the largest simulation between~$G$ and~$G'$.
\myend
\end{proof}


\section{Computing Directed Simulations}
\label{section: comp-CDS}

In this section, we extend and adapt the algorithms given in the previous section to obtain algorithms for computing the largest directed simulation between two given finite fuzzy graphs $G = \tuple{V, E, L}$ and $G' = \tuple{V', E', L'}$. 
We use all notions and data structures introduced in the previous section. 

It can be checked that Condition~\eqref{eq: CS 3} holds for all $r \in \SE$, $x \in V$ and $x',y' \in V'$ iff the following condition holds for all $\tuple{r,x,y'} \in \SE \times V \times V'$. 
\begin{equation}\label{eq: GFDJS}
\sup \{E(x',r,y') \mid x' \in \PrevP_r(y') \land \tuple{x,x'} \in Z\} \leq 
\sup \{E(x,r,y) \mid y \in \Next_r(x) \land \tuple{y,y'} \in Z\}.
\end{equation}
In some sense, Conditions~\eqref{eq: CS 3} and~\eqref{eq: GFDJS} are dual to Conditions~\eqref{eq: CS 2} and~\eqref{eq: CS 4}, respectively. Algorithm~\ref{alg: ComputeDirectedSimulation} (\ComputeDirectedSimulation) given on page~\pageref{alg: ComputeDirectedSimulation} is our adapted extension of Algorithm~\ref{alg: ComputeSimulation} for computing the largest directed simulation between~$G$ and~$G'$. It takes into account the mentioned duality. In particular, 
\begin{itemize}
\item the procedure $\ProcessPrev'(r,x,y')$ is dual to the procedure $\ProcessPrev(r,y,x')$, 
\item the statements \ref{st: KWOAJ 1}-\ref{st: KWOAJ 4} and \ref{st: KWOAJ 7}-\ref{st: KWOAJ 10} of Algorithm~\ref{alg: ComputeDirectedSimulation} are the same as the statements \ref{st: HGDSJ 1}-\ref{st: HGDSJ 4} and \ref{st: HGDSJ 5}-\ref{st: HGDSJ 8} of Algorithm~\ref{alg: ComputeSimulation}, respectively, 
\item the loops specified by the statements~\ref{st: KWOAJ 5}-\ref{st: KWOAJ 6} and \ref{st: KWOAJ 11}-\ref{st: KWOAJ 12} of Algorithm~\ref{alg: ComputeDirectedSimulation} are dual to the loops specified by its statements~\ref{st: KWOAJ 3}-\ref{st: KWOAJ 4} and \ref{st: KWOAJ 9}-\ref{st: KWOAJ 10}, respectively. 
\end{itemize}

\begin{figure*}[h]
\begin{procedure}[H]
\caption{ProcessPrev$'$($r,x,y'$)\label{proc: ProcessPrev'}\ \ // process $\PrevP_r(y')$ with respect to $\tuple{r,x,y'}$}
$d := \sup \{E(x,r,y) \mid y \in \Next_r(x) \textrm{ and } \tuple{y,y'} \in Z \cup \removeZ\}$\label{st: HANAE 1b}\;
\ForEach{$x' \in \PrevP_r(y')$\label{st: HANAE 2b}}{
	\If{$E'(x',r,y') > d$ and $\tuple{x,x'} \in Z$}{ 
		move $\tuple{x,x'}$ from $Z$ to $\removeZ$\label{st: HANAE 4b}\;
	}
}	
\end{procedure}
	
\begin{algorithm}[H]
	\caption{\ComputeDirectedSimulation\label{alg: ComputeDirectedSimulation}}
	\Input{finite fuzzy graphs $G = \tuple{V,E,L}$ and $G' = \tuple{V',E',L'}$.}
	\Output{the largest directed simulation between $G$ and $G'$.}
	
	\BlankLine
	$Z := \{\tuple{x,x'} \in V \times V' \mid L(x) \leq L'(x')\}$\label{st: KWOAJ 1}\;
	$\removeZ := \emptyset$\;
	
	\ForEach{$\tuple{r,y,x'} \in \SE \times V \times V'$\label{st: KWOAJ 3}}{
		$\ProcessPrev(r,y,x')$\label{st: KWOAJ 4}\tcp*{defined on page~\pageref{proc: ProcessPrev}}
	}

	\ForEach{$\tuple{r,x,y'} \in \SE \times V \times V'$\label{st: KWOAJ 5}}{
		$\ProcessPrev'(r,x,y')$\label{st: KWOAJ 6}\;
	}
	
	\BlankLine
	\While{$\removeZ \neq \emptyset$\label{st: KWOAJ 7}}{
		extract $\tuple{y,y'}$ from $\removeZ$\;
		
		\ForEach{$r \in \SE$ and $x' \in \PrevP_r(y')$\label{st: KWOAJ 9}}{
			$\ProcessPrev(r,y,x')$\label{st: KWOAJ 10}\;
		}

		\ForEach{$r \in \SE$ and $x \in \Prev_r(y)$\label{st: KWOAJ 11}}{
			$\ProcessPrev'(r,x,y')$\label{st: KWOAJ 12}\;
		}
	}
	\Return $Z$\;
\end{algorithm}
\end{figure*}


\begin{example}\label{example: JDJWO 2}
Let $\SV = \{p\}$, $\SE = \{r\}$ and let $G$ and $G'$ be the fuzzy graphs specified in Example~\ref{example: JDJWO 0a}. 
Consider the execution of Algorithm~\ref{alg: ComputeDirectedSimulation} for $G$ and $G'$.  
\begin{itemize}
\item As stated in Example~\ref{example: JDJWO} for Algorithm~\ref{alg: ComputeSimulation}, before executing the statement~\ref{st: KWOAJ 5}, we have that 
	\begin{eqnarray*}
	Z & = & (\{a,b,c\} \times \{e\}) \cup \{\tuple{d,f}\} \\
	\removeZ & = & \{\tuple{a,f},\tuple{b,f},\tuple{c,f}\}.
	\end{eqnarray*}

\item Executing the ``foreach'' loop specified by the statements 5 and 6, for the iteration involved with $\tuple{r,x,y'}$, 
	\begin{itemize}
	\item if $\tuple{r,x,y'} \in \{\tuple{r,a,e}, \tuple{r,a,f}, \tuple{r,b,e}, \tuple{r,b,f}\}$, then no changes are made, 
	\item if $\tuple{r,x,y'} = \tuple{r,c,e}$, then $\tuple{c,e}$ is moved from $Z$ to $\removeZ$, 
	\item if $\tuple{r,x,y'} \in \{\tuple{r,c,f}, \tuple{r,d,e}, \tuple{r,d,f}\}$, then no changes are made.
	\end{itemize}

\item Thus, before executing the statement~\ref{st: KWOAJ 7}, we have that 
	\begin{eqnarray*}
	Z & = & \{\tuple{a,e},\tuple{b,e},\tuple{d,f}\} \\
	\removeZ & = & \{\tuple{a,f},\tuple{b,f},\tuple{c,f},\tuple{c,e}\}.
	\end{eqnarray*}

\item Executing the ``while'' loop, for the iteration involved with $\tuple{y,y'}$ extracted from $\removeZ$, 
	\begin{itemize}
	\item if $\tuple{y,y'} = \tuple{a,f}$, then no additional changes are made, 
	\item if $\tuple{y,y'} = \tuple{b,f}$, then $\tuple{a,e}$ is moved from $Z$ to $\removeZ$, 
	\item if $\tuple{y,y'} = \tuple{c,f}$, then no additional changes are made, 
	\item if $\tuple{y,y'} = \tuple{c,e}$, then $\tuple{b,e}$ is moved from $Z$ to $\removeZ$, 
	\item if $\tuple{y,y'} = \tuple{a,e}$, then no additional changes are made, 
	\item if $\tuple{y,y'} = \tuple{b,e}$, then $\tuple{d,f}$ is moved from $Z$ to $\removeZ$, 
	\item if $\tuple{y,y'} = \tuple{d,f}$, then no additional changes are made. 
	\end{itemize}

\item The algorithm returns $Z = \emptyset$. This coincides with the claim of Example~\ref{example: JDJWO 0b} that $\emptyset$ is the largest directed simulation between~$G$ and~$G'$. 
\myend
\end{itemize}
\end{example}


The following lemma is a counterpart of Lemma~\ref{lemma: JHDKA}. 
It can be proved analogously. 
 
\begin{lemma}\label{lemma: DJWYS}
The following assertions are invariants of the ``while'' loop of Algorithm~\ref{alg: ComputeDirectedSimulation}: 
\begin{enumerate}
\item The largest directed simulation between $G$ and $G'$ is a subset of $Z$.
\item For all $\tuple{r,y,x'} \in \SE \times V \times V'$, if $x \in \Prev_r(y)$ and $Z[x,x']$ holds, then \eqref{eq: CS 5b} holds.
\item For all $\tuple{r,x,y'} \in \SE \times V \times V'$, if $x' \in \PrevP_r(y')$ and $Z[x,x']$ holds, then the following counterpart of~\eqref{eq: CS 5b} holds.
	\begin{equation}\label{eq: JHFEK}
	E'(x',r,y') \leq \sup \{E(x,r,y) \mid y \in \Next_r(x) \land \tuple{y,y'} \in Z \cup \removeZ\}
	\end{equation}
\end{enumerate}
\end{lemma}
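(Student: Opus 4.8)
The plan is to transcribe the proof of Lemma~\ref{lemma: JHDKA} and add a symmetric half that accounts for the new procedure $\mathit{ProcessPrev}'$ and for Condition~\eqref{eq: CS 3}. First I would fix $Z_0$ to be the largest directed simulation between $G$ and $G'$ and record the two facts about $Z_0$ that drive the argument. Since every directed simulation is in particular a simulation (the fourth assertion of Proposition~\ref{prop: HGDFJ}), $Z_0$ satisfies Condition~\eqref{eq: CS 2}, equivalently Condition~\eqref{eq: CS 4}; instantiated with $Z = Z_0$ this is exactly inequality~\eqref{eq: CS 6}. Because $Z_0$ is moreover a directed simulation it satisfies Condition~\eqref{eq: CS 3}, equivalently Condition~\eqref{eq: GFDJS}; instantiating the latter with $Z = Z_0$ gives an inequality dual to~\eqref{eq: CS 6}, which I will refer to as $(\star)$.

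For the first assertion I would argue by induction on the number of executed steps that $Z_0 \subseteq Z$. The base case holds right after statement~\ref{st: KWOAJ 1}, since $\tuple{x,x'} \in Z_0$ forces $L(x) \leq L'(x')$ by Condition~\eqref{eq: CS 1}. For the inductive step, observe that $Z$ only ever shrinks, and a pair $\tuple{x,x'}$ is removed from $Z$ only inside a call $\mathit{ProcessPrev}(r,y,x')$ or a call $\mathit{ProcessPrev}'(r,x,y')$. In the first case the pair violates~\eqref{eq: CS 5b}, i.e.\ $E(x,r,y)$ exceeds $d = \sup\{E'(x',r,z') \mid z' \in \NextP_r(x') \land \tuple{y,z'} \in Z \cup \removeZ\}$; were $\tuple{x,x'}$ in $Z_0$, then from $Z_0 \subseteq Z \cup \removeZ$ (induction hypothesis) and inequality~\eqref{eq: CS 6} we would get $d \geq E(x,r,y)$, a contradiction. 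The second case is symmetric, using $(\star)$ and the fact that the removed pair violates~\eqref{eq: JHFEK}. Hence every removed pair lies outside $Z_0$; and since extracting a pair from $\removeZ$ does not change $Z$, the containment $Z_0 \subseteq Z$ is preserved throughout.

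The second and third assertions are the two ``local'' bookkeeping invariants, and I would dispatch them exactly as in Lemma~\ref{lemma: JHDKA}. The second holds immediately after the initialization loop of statements~\ref{st: KWOAJ 3}--\ref{st: KWOAJ 4}, because each call $\mathit{ProcessPrev}(r,y,x')$ deletes from $Z$ precisely the pairs $\tuple{x,x'}$ with $x \in \Prev_r(y)$ that falsify~\eqref{eq: CS 5b}; afterwards the only way the right-hand side of~\eqref{eq: CS 5b} can decrease for a fixed $\tuple{r,y,x'}$ is when some $\tuple{y,z'}$ is extracted from $\removeZ$, and statement~\ref{st: KWOAJ 10} then re-runs $\mathit{ProcessPrev}(r,y,x')$ for every relevant $x'$, re-establishing the invariant (moving a pair from $Z$ into $\removeZ$ leaves $Z \cup \removeZ$ unchanged and hence does not affect that supremum). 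The third assertion is the exact dual, established by the loop of statements~\ref{st: KWOAJ 5}--\ref{st: KWOAJ 6} and restored by $\mathit{ProcessPrev}'$ at statement~\ref{st: KWOAJ 12} whenever a pair $\tuple{y,y'}$ leaves $\removeZ$. I expect the only points requiring genuine care, beyond what is already in Lemma~\ref{lemma: JHDKA}, to be: getting $(\star)$ right (it needs the definition of directed simulation, Condition~\eqref{eq: CS 3}, not merely Proposition~\ref{prop: HGDFJ}); and verifying that, when a pair leaves $Z \cup \removeZ$, \emph{both} families of local invariants are repaired --- which is exactly why the body of the ``while'' loop of Algorithm~\ref{alg: ComputeDirectedSimulation} re-processes $\PrevP_r(y')$ with $\mathit{ProcessPrev}$ and $\Prev_r(y)$ with $\mathit{ProcessPrev}'$. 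Otherwise the proof is a line-by-line copy of the proof of Lemma~\ref{lemma: JHDKA} with the symmetric case appended.
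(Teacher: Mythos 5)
Your proposal is correct and takes essentially the same route as the paper, which proves this lemma simply by declaring it ``a counterpart of Lemma~\ref{lemma: JHDKA}'' provable ``analogously''; you have spelled out exactly that analogy, correctly identifying the two extra ingredients (the dual inequality obtained from Condition~\eqref{eq: CS 3} via~\eqref{eq: GFDJS}, and the fact that extracting $\tuple{y,y'}$ from $\removeZ$ triggers both re-processing loops so that both local invariants are repaired).
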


The following theorem is a counterpart of Theorem~\ref{theorem: HDKAL}. 
 
\begin{theorem}
Algorithm~\ref{alg: ComputeDirectedSimulation} always terminates and returns the largest directed simulation between the input fuzzy graphs.
\end{theorem}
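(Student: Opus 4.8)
The plan is to follow the proof of Theorem~\ref{theorem: HDKAL} almost line for line, now drawing on Lemma~\ref{lemma: DJWYS} in place of Lemma~\ref{lemma: JHDKA}. So the two main steps are: first establish termination, then establish correctness from the invariants.

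\emph{Termination.} First I would note that after the initialization (statements~\ref{st: KWOAJ 1}--\ref{st: KWOAJ 6}) no pair is ever added to $Z$: the only modifications of $Z$ are the removals performed inside $\ProcessPrev$ and $\ProcessPrev'$, and each pair removed from $Z$ is precisely the pair then inserted into $\removeZ$. Hence at most $|V|\cdot|V'|$ pairs are ever placed into $\removeZ$ over the whole run. Since each iteration of the ``while'' loop (statements~\ref{st: KWOAJ 7}--\ref{st: KWOAJ 12}) extracts exactly one pair from $\removeZ$, the loop, and therefore the algorithm, terminates. The additional ``foreach'' bodies at statements~\ref{st: KWOAJ 5}--\ref{st: KWOAJ 6} and \ref{st: KWOAJ 11}--\ref{st: KWOAJ 12} do not threaten this, since $\ProcessPrev'$, just like $\ProcessPrev$, can only remove pairs from $Z$.

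\emph{Correctness.} Upon termination $\removeZ = \emptyset$, so the second assertion of Lemma~\ref{lemma: DJWYS} reduces to: for every $\tuple{r,y,x'}$ and every $x \in \Prev_r(y)$ with $Z[x,x']$, inequality~\eqref{eq: CS 5b} holds with $Z$ in place of $Z \cup \removeZ$; taking the supremum over all such $x$ turns this into Condition~\eqref{eq: CS 4}, equivalently Condition~\eqref{eq: CS 2} (as recorded at the start of Section~\ref{section: comp-CS}). Symmetrically, the third assertion of Lemma~\ref{lemma: DJWYS} with $\removeZ = \emptyset$ yields~\eqref{eq: JHFEK} with $Z$ on the right-hand side, whose supremum form over $x' \in \PrevP_r(y')$ is Condition~\eqref{eq: GFDJS}, equivalently Condition~\eqref{eq: CS 3} (as recorded at the start of Section~\ref{section: comp-CDS}). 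Since no pair is ever added back to $Z$ after statement~\ref{st: KWOAJ 1}, $Z$ still satisfies Condition~\eqref{eq: CS 1}. Hence the returned $Z$ is a directed simulation between $G$ and $G'$, and by the first assertion of Lemma~\ref{lemma: DJWYS} every directed simulation between $G$ and $G'$ stays inside $Z$; therefore $Z$ is the largest one.

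The only genuine work beyond transcribing the proof of Theorem~\ref{theorem: HDKAL} is being confident that the invariants of Lemma~\ref{lemma: DJWYS} (which I may assume here) really do survive the interleaving of the $\ProcessPrev$ and $\ProcessPrev'$ updates, and this is the step I expect to be the main obstacle — though it is properly pushed into the lemma rather than the theorem. The delicate point is that removing a pair from $Z$ only shrinks $Z$, so it can only lower the right-hand suprema of~\eqref{eq: CS 5b} and~\eqref{eq: JHFEK}; one must check that each instance newly violated by such a shrink is re-examined, which is exactly why extracting $\tuple{y,y'}$ from $\removeZ$ triggers $\ProcessPrev(r,y,x')$ for all $x' \in \PrevP_r(y')$ (statements~\ref{st: KWOAJ 9}--\ref{st: KWOAJ 10}) and $\ProcessPrev'(r,x,y')$ for all $x \in \Prev_r(y)$ (statements~\ref{st: KWOAJ 11}--\ref{st: KWOAJ 12}). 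With that bookkeeping confirmed, the theorem follows precisely as Theorem~\ref{theorem: HDKAL} did.
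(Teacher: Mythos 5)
Your proof is correct and follows exactly the route the paper intends: it transcribes the termination and correctness argument of Theorem~\ref{theorem: HDKAL}, replacing Lemma~\ref{lemma: JHDKA} by Lemma~\ref{lemma: DJWYS} and additionally invoking its third assertion together with Condition~\eqref{eq: GFDJS} to establish Condition~\eqref{eq: CS 3}. The paper's own proof is precisely this observation stated in two sentences, so there is nothing to add.
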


This theorem can be proved analogously as done for Theorem~\ref{theorem: HDKAL}. In particular, in addition to the second assertion of Lemma~\ref{lemma: DJWYS} and Condition~\eqref{eq: CS 4}, the proof also exploits the third assertion of Lemma~\ref{lemma: DJWYS} and Condition~\eqref{eq: GFDJS}.  

Like its counterpart, Algorithm~\ref{alg: ComputeDirectedSimulation} has been formulated on an abstract level without implementation details in order to increase simplicity and facilitate understanding. We now refine this algorithm by giving implementation details so that the resulting algorithm has a complexity of order $O((m+n)n)$. In short, the new algorithm is an adapted extension of Algorithm~\ref{alg: ComputeSimulationEfficiently}. 

\begin{figure*}
	\begin{procedure}[H]
		\caption{UpdateRcPrev$'$($r,x,y'$)\label{proc: UpdateRcPrev'}\ \ // for updating $\rcPrevP$}
		
		\lIf{$\rcNext[r,x,y']$ is empty}{d := 0\label{st: HGNSA 1b}}
		\Else{
			let $y$ be the key of the last element of $\rcNext[r,x,y']$\;
			$d := E[x,r,y]$\label{st: HGNSA 4b}\;
		}
		
		\While{$\rcPrevP[r,x,y']$ is not empty\label{st: HGNSA 5b}}{
			let $x'$ be the last element of $\rcPrevP[r,x,y']$\label{st: HGNSA 6b}\; 
			\lIf{$E'[x',r,y'] \leq d$}{\Break} 
			delete from $\rcPrevP[r,x,y']$ the last element\;  
			\If{$Z[x,x']$}{
				$Z[x,x'] := \false$\;
				add $\tuple{x,x'}$ to $\removeZ$\label{st: HGNSA 11b}\;
			}
		}
	\end{procedure}
	
	\begin{procedure}[H]
		\caption{InitializeDS()\label{proc: InitializeDS}\ \ // a counterpart of $\Initialize$ for directed simulations}
		construct arrays $\Next$, $\NextP$, $\Prev$, $\PrevP$, $E$, $E'$, $\EdgeID$, $\EdgeIDp$ according to their specifications\label{st: HRLSN 1b}\; 
		\BlankLine
		
		\lForEach{$\tuple{x,x'} \in V \times V'$}{$Z[x,x'] := L(x) \leq L'(x')$\label{st: HRLSN 2b}}
		set $\removeZ$ to the empty queue\label{st: HRLSN 3b}\;
		
		\ForEach{$\tuple{r,x} \in \SE \times V$}{
			sort $\Next[r,x]$ in ascending order w.r.t.~$E[x,r,y]$ for $y \in \Next[r,x]$\;
		}  
		\ForEach{$\tuple{r,x'} \in \SE \times V'$}{
			sort $\NextP[r,x']$ in ascending order w.r.t.~$E'[x',r,y']$ for $y' \in \NextP[r,x']$\;
		}  
		\ForEach{$\tuple{r,y} \in \SE \times V$}{
			sort $\Prev[r,y]$ in ascending order w.r.t.~$E[x,r,y]$ for $x \in \Prev[r,y]$\;
		}  
		\ForEach{$\tuple{r,y'} \in \SE \times V'$}{
			sort $\PrevP[r,y']$ in ascending order w.r.t.~$E'[x',r,y']$ for $x' \in \PrevP[r,y']$\;
		}  
		\lForEach{$y \in V$ and $e' \in 0..(|E'|-1)$}{$\rcNextElemP[y,e'] := \Null$} 
		\lForEach{$y' \in V'$ and $e \in 0..(|E|-1)$}{$\rcNextElem[y',e] := \Null$} 
		
		\ForEach{$\tuple{r,y,x'} \in \SE \times V \times V'$\label{st: HRLSN 9b}}{
			construct a vector $\rcPrev[r,y,x']$ of all $x \in \Prev[r,y]$ such that $Z[x,x']$ holds, preserving the order\label{st: HRLSN 10b}\;
			construct a doubly linked list $\rcNextP[r,y,x']$ whose elements' keys are all $y' \in \NextP[r,x']$ such that $Z[y,y']$ holds, preserving the order\label{st: HRLSN 11b}\;
			\ForEach{element $u$ of $\rcNextP[r,y,x']$\label{st: HRLSN 12b}}{
				let $y'$ be the key of $u$ and let $e' = \EdgeIDp[x',r,y']$\; 
				$\rcNextElemP[y,e'] := u$\label{st: HRLSN 14b}\;
			}
		} 
		
		\ForEach{$\tuple{r,x,y'} \in \SE \times V \times V'$\label{st: HRLSN 9c}}{
			construct a vector $\rcPrevP[r,x,y']$ of all $x' \in \PrevP[r,y']$ such that $Z[x,x']$ holds, preserving the order\label{st: HRLSN 10c}\;
			construct a doubly linked list $\rcNext[r,x,y']$ whose elements' keys are all $y \in \Next[r,x]$ such that $Z[y,y']$ holds, preserving the order\label{st: HRLSN 11c}\;
			\ForEach{element $u$ of $\rcNext[r,x,y']$\label{st: HRLSN 12c}}{
				let $y$ be the key of $u$ and let $e = \EdgeID[x,r,y]$\; 
				$\rcNextElem[y',e] := u$\label{st: HRLSN 14c}\;
			}
		} 

		\lForEach{$\tuple{r,y,x'} \in \SE \times V \times V'$\label{st: HRLSN 15b}}{$\UpdateRcPrev(r,y,x')$\label{st: HRLSN 16b}}
		\lForEach{$\tuple{r,x,y'} \in \SE \times V \times V'$\label{st: HRLSN 15c}}{$\UpdateRcPrev'(r,x,y')$\label{st: HRLSN 16c}}
	\end{procedure}
\end{figure*}

Apart from the data structures described in Section~\ref{section: comp-CS2}, we also use arrays $\Next$, $\EdgeID$, $\rcNext$, $\rcNextElem$ and $\rcPrevP$, which are dual to $\NextP$, $\EdgeIDp$, $\rcNextP$, $\rcNextElemP$ and $\rcPrev$, respectively. 
For clarity, they are explicitly specified below. 

\begin{itemize}
\item For $r \in \SE$ and $x \in V$, $\Next[r,x]$ is a vector representing $\Next_r(x)$. 

\item $\EdgeID : V \times \SE \times V \to 0..(|E|-1)$ is an array used for identifying nonzero edges of $G$ with the following properties: if $\tuple{x_1,r_1,y_1}$ and $\tuple{x_2,r_2,y_2}$ are different triples from $V \times \SE \times V$ such that $E[x_1,r_1,y_1] > 0$ and $E[x_2,r_2,y_2] > 0$, then $\EdgeID[x_1,r_1,y_1] \neq \EdgeID[x_2,r_2,y_2]$. 

\item $\rcNext$ is an array such that, for $\tuple{r,x,y'} \in \SE \times V \times V'$, $\rcNext[r,x,y']$ is a doubly linked list consisting of the vertices $y \in \Next_r(x)$ such that $Z[y,y'] \lor (\tuple{y,y'} \in \removeZ)$ holds. The list is sorted in ascending order with respect to $E[x,r,y]$. The vertex contained in an element of $\rcNext[r,x,y']$ is called the key of that element. 
	
\item $\rcNextElem$ is an array with indices from $V' \times (0..(|E|-1))$ such that, if $Z[y,y'] \lor (\tuple{y,y'} \in \removeZ)$, $r \in \SE$, $x \in \Prev_r(y)$ and $e = \EdgeID[x,r,y]$, then $\rcNextElem[y',e]$ is (a reference to) the element of the doubly linked list $\rcNext[r,x,y']$ whose key is~$y$. 
	
\item $\rcPrevP$ is an array such that, for $\tuple{r,x,y'} \in \SE \times V \times V'$, $\rcPrevP[r,x,y']$ is a vector consisting of the vertices $x' \in \PrevP_r(y')$ such that the following condition holds 
	\begin{equation}
	E'[x',r,y'] \leq \sup \{E[x,r,y] \mid y \in \Next_r(x) \land (Z[y,y'] \lor (\tuple{y,y'} \in \removeZ))\}
	\end{equation}
and the vector is sorted in ascending order with respect to $E[x',r,y']$. 
\end{itemize}

\begin{algorithm}
	\caption{\ComputeDirectedSimulationEfficiently\label{alg: ComputeDirectedSimulationEfficiently}}
	\Input{finite fuzzy graphs $G = \tuple{V,E,L}$ and $G' = \tuple{V',E',L'}$.}
	\Output{the largest directed simulation between $G$ and $G'$.}
	\BlankLine
	$\InitializeDS()$\tcp*{defined on page~\pageref{proc: InitializeDS}}
	\While{$\removeZ$ is not empty\label{st: JDNAA 2b}}{
		extract $\tuple{y,y'}$ from $\removeZ$\label{st: JDNAA 3b}\;
		
		\ForEach{$r \in \SE$ and $x' \in \PrevP[r,y']$\label{st: JDNAA 4b}}{
			$e' := \EdgeIDp[x',r,y']$\label{st: JDNAA 5b}\; 
			$u := \rcNextElemP[y,e']$\; 
			delete $u$ from $\rcNextP[r,y,x']$\;
			$\rcNextElemP[y,e'] := \Null$\label{st: JDNAA 8b}\;
			$\UpdateRcPrev(r,y,x')$\label{st: JDNAA 9b}\tcp*{defined on page~\pageref{proc: UpdateRcPrev}}
		}
		
		\ForEach{$r \in \SE$ and $x \in \Prev[r,y]$\label{st: JDNAA 4c}}{
			$e := \EdgeID[x,r,y]$\label{st: JDNAA 5c}\; 
			$u := \rcNextElem[y',e]$\; 
			delete $u$ from $\rcNext[r,x,y']$\;
			$\rcNextElem[y',e] := \Null$\label{st: JDNAA 8c}\;
			$\UpdateRcPrev'(r,x,y')$\label{st: JDNAA 9c}\tcp*{defined on page~\pageref{proc: UpdateRcPrev'}}
		}
	}
	\Return the relation corresponding to $Z$\label{st: JDNAA 10b}\;
\end{algorithm}

Algorithm~\ref{alg: ComputeDirectedSimulationEfficiently} (\ComputeDirectedSimulationEfficiently) given on page~\pageref{alg: ComputeDirectedSimulationEfficiently} is a reformulation of Algorithm~\ref{alg: ComputeDirectedSimulation} using the above mentioned data structures. 
It is an adapted extension of Algorithm~\ref{alg: ComputeSimulationEfficiently} (\ComputeSimulationEfficiently). 
In particular, observe the following. 
\begin{itemize}
\item The procedure $\UpdateRcPrev'(r,x,y')$ (on page~\pageref{proc: UpdateRcPrev'}) is dual to the procedure $\UpdateRcPrev(r,y,x')$. 

\item The procedure $\InitializeDS$ (on page~\pageref{proc: InitializeDS}) is a counterpart of the procedure $\Initialize$. 
It initializes not only $\EdgeIDp$, $\rcNextP$, $\rcNextElemP$ and $\rcPrev$ but also $\EdgeID$, $\rcNext$, $\rcNextElem$ and $\rcPrevP$ dually. The postfix ``DS'' relates to ``directed simulation''. 

\item The statements~\ref{st: JDNAA 2b}-\ref{st: JDNAA 9b} of Algorithm~\ref{alg: ComputeDirectedSimulationEfficiently} are the same as the statements~\ref{st: JDNAA 2}-\ref{st: JDNAA 9} of Algorithm~\ref{alg: ComputeSimulationEfficiently}. 

\item The statements~\ref{st: JDNAA 4c}-\ref{st: JDNAA 9c} of Algorithm~\ref{alg: ComputeDirectedSimulationEfficiently} are dual to the statements~\ref{st: JDNAA 4b}-\ref{st: JDNAA 9b}.
\end{itemize}

We have implemented Algorithm~\ref{alg: ComputeDirectedSimulationEfficiently} in C++ and shared the code~\cite{compCSfFS-impl} to allow experiments with it.  

The following lemma is a counterpart of Lemma~\ref{lemma: JHDKA 2}. 
It can be proved analogously. 

\begin{lemma}\label{lemma: DJWYS 2}
The following assertions are invariants of the ``while'' loop of Algorithm~\ref{alg: ComputeDirectedSimulationEfficiently}: 
\begin{enumerate}
\item The data structures $\rcNext$, $\rcNextP$, $\rcNextElem$, $\rcNextElemP$, $\rcPrev$ and $\rcPrevP$ satisfy their specifications. 

\item The largest direct simulation between $G$ and $G'$ is a subset of $\{\tuple{x,x'} \in V \times V' \mid Z[x,x']\}$.

\item For all $\tuple{r,y,x'} \in \SE \times V \times V'$, if $x \in \Prev_r(y)$ and $Z[x,x']$ holds, then $x \in \rcPrev[r,y,x']$. 

\item For all $\tuple{r,x,y'} \in \SE \times V \times V'$, if $x' \in \PrevP_r(y')$ and $Z[x,x']$ holds, then $x' \in \rcPrevP[r,x,y']$. 
\end{enumerate}
\end{lemma}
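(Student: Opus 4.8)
The plan is to follow the proof of Lemma~\ref{lemma: JHDKA 2} step by step, handling the new arrays $\rcNext$, $\rcNextElem$ and $\rcPrevP$ exactly as $\rcNextP$, $\rcNextElemP$ and $\rcPrev$ were handled there, and to add, for the second assertion, the one ingredient specific to directed simulations: Condition~\eqref{eq: GFDJS}. Throughout, let $Z_0$ denote the largest directed simulation between $G$ and $G'$; by the fourth assertion of Proposition~\ref{prop: HGDFJ} it is in particular a simulation, hence satisfies~\eqref{eq: CS 4} (equivalently~\eqref{eq: CS 6}), and being a directed simulation it also satisfies~\eqref{eq: GFDJS}.

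I would first dispose of the first assertion. The procedure $\InitializeDS$ establishes the specifications of all six arrays on exit: the four sorting loops give the ``ascending order'' clauses; the loops of statements~\ref{st: HRLSN 9b}-\ref{st: HRLSN 14b} and~\ref{st: HRLSN 9c}-\ref{st: HRLSN 14c} build $\rcPrev$, $\rcNextP$, $\rcPrevP$, $\rcNext$ and, through their inner loops, the back-reference arrays $\rcNextElemP$ and $\rcNextElem$, all while $\removeZ$ is still empty; and the two final loops over $\UpdateRcPrev$ and $\UpdateRcPrev'$ trim $\rcPrev$ and $\rcPrevP$ down to the vertices satisfying~\eqref{eq: CS 7} and its dual. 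For the inductive step one checks that a single iteration of the ``while'' loop of Algorithm~\ref{alg: ComputeDirectedSimulationEfficiently} preserves all six specifications: extracting $\tuple{y,y'}$ from $\removeZ$ flips the predicate ``$Z[y,y'] \lor (\tuple{y,y'} \in \removeZ)$'' only at the pair $\tuple{y,y'}$ itself; statements~\ref{st: JDNAA 5b}-\ref{st: JDNAA 8b} then delete precisely the element keyed by $y'$ from every list $\rcNextP[r,y,x']$ with $x' \in \PrevP_r(y')$, and statements~\ref{st: JDNAA 5c}-\ref{st: JDNAA 8c} delete the element keyed by $y$ from every $\rcNext[r,x,y']$ with $x \in \Prev_r(y)$, locating these elements in constant time via $\EdgeIDp,\rcNextElemP$ and $\EdgeID,\rcNextElem$ respectively and resetting those references to $\Null$; and the ensuing calls to $\UpdateRcPrev$ and $\UpdateRcPrev'$ re-trim $\rcPrev$ and $\rcPrevP$ against the recomputed suprema $d$ of statements~\ref{st: HGNSA 1}-\ref{st: HGNSA 4} and~\ref{st: HGNSA 1b}-\ref{st: HGNSA 4b}.

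For the second assertion I would induct on the number of executed statements starting from statement~\ref{st: HRLSN 2b} of $\InitializeDS$, just as in Lemma~\ref{lemma: JHDKA 2}; at that instant $Z$ holds $\{\tuple{x,x'} \mid L(x) \leq L'(x')\} \supseteq Z_0$ by~\eqref{eq: CS 1}. Since $Z$ changes only inside the ``if'' branch of $\UpdateRcPrev(r,y,x')$ (statements~\ref{st: HGNSA 6}-\ref{st: HGNSA 11}) or of $\UpdateRcPrev'(r,x,y')$ (statements~\ref{st: HGNSA 6b}-\ref{st: HGNSA 11b}), it suffices to examine these. In the first case the argument of Lemma~\ref{lemma: JHDKA 2} applies unchanged: by the first invariant, $d = \sup\{E'(x',r,y') \mid y' \in \NextP_r(x') \land (Z[y,y'] \lor \tuple{y,y'} \in \removeZ)\}$, which by the induction hypothesis $Z_0 \subseteq \{\tuple{x,x'} \mid Z[x,x']\}$ dominates $\sup\{E'(x',r,y') \mid y' \in \NextP_r(x') \land \tuple{y,y'} \in Z_0\}$, hence by~\eqref{eq: CS 6} dominates $E(x,r,y)$ whenever $\tuple{x,x'} \in Z_0$, so no pair of $Z_0$ is removed. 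In the dual case, by the first invariant $d = \sup\{E[x,r,y] \mid y \in \Next_r(x) \land (Z[y,y'] \lor \tuple{y,y'} \in \removeZ)\}$, which by the induction hypothesis dominates $\sup\{E(x,r,y) \mid y \in \Next_r(x) \land \tuple{y,y'} \in Z_0\}$, hence by~\eqref{eq: GFDJS} dominates $E'(x',r,y')$ whenever $\tuple{x,x'} \in Z_0$, so $\tuple{x,x'} \in Z_0$ cannot pass the removal test $E'[x',r,y'] > d$. This yields $Z_0 \subseteq \{\tuple{x,x'} \mid Z[x,x']\}$ at every point of the execution.

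Finally, the third and fourth assertions hold for the same bookkeeping reason as the third assertion of Lemma~\ref{lemma: JHDKA 2}: they hold after statements~\ref{st: HRLSN 9b}-\ref{st: HRLSN 14b} and~\ref{st: HRLSN 9c}-\ref{st: HRLSN 14c} of $\InitializeDS$, which place every $x$ with $Z[x,x']$ into $\rcPrev[r,y,x']$ and every $x'$ with $Z[x,x']$ into $\rcPrevP[r,x,y']$, and a vertex is deleted from $\rcPrev[r,y,x']$ only in the branch of $\UpdateRcPrev$ that in the same step sets $Z[x,x']$ to $\false$, dually for $\rcPrevP[r,x,y']$ and $\UpdateRcPrev'$. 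I expect the main obstacle to be organizational rather than mathematical: since $\UpdateRcPrev$ and $\UpdateRcPrev'$ both enqueue pairs onto the single queue $\removeZ$ and both mutate the shared array $Z$, one must verify that, regardless of which of the two procedures produced such a pair, its later extraction correctly updates \emph{both} families of lists, namely the $\rcNextP$-lists through statements~\ref{st: JDNAA 4b}-\ref{st: JDNAA 9b} and the $\rcNext$-lists through statements~\ref{st: JDNAA 4c}-\ref{st: JDNAA 9c}, so that the six specifications of the first assertion are maintained together; this interaction is the only genuine departure from the simulation case, and it reduces to carefully matching the definitions of $\EdgeID$, $\EdgeIDp$, $\rcNextElem$ and $\rcNextElemP$.
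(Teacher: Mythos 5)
Your proposal is correct and follows exactly the route the paper intends: the paper gives no explicit proof of Lemma~\ref{lemma: DJWYS 2}, stating only that it ``can be proved analogously'' to Lemma~\ref{lemma: JHDKA 2}, and you carry out precisely that analogy, correctly supplying the one genuinely new ingredient --- using Condition~\eqref{eq: GFDJS} (alongside the fact that the largest directed simulation is also a simulation satisfying~\eqref{eq: CS 6}) to show that removals performed by $\UpdateRcPrev'$ never discard a pair of $Z_0$. Your closing remark about the two procedures sharing the queue $\removeZ$ and both families of lists being updated on each extraction is a sensible sanity check but does not change the argument.
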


The following theorem is a counterpart of Theorem~\ref{theorem: JHLWA}. It can be proved analogously, using Lemma~\ref{lemma: DJWYS 2}
 instead of Lemma~\ref{lemma: JHDKA 2}. 
 
\begin{theorem}\label{theorem: JHLWA 2}
Algorithm~\ref{alg: ComputeDirectedSimulationEfficiently} runs in time $O((m+n)n)$ and returns the largest directed simulation between given finite fuzzy graphs $G = \tuple{V,E,L}$ and $G' = \tuple{V',E',L'}$, where $n = |V| + |V'|$ and $m = |E| + |E'|$. 
\end{theorem}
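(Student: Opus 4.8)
The plan is to follow the proof of Theorem~\ref{theorem: JHLWA} verbatim, substituting Lemma~\ref{lemma: DJWYS 2} for Lemma~\ref{lemma: JHDKA 2} and absorbing the extra cost of the dual bookkeeping. I would first bound the cost of $\InitializeDS$: every block of $\InitializeDS$ is a block of $\Initialize$ or its exact dual, so each costs no more than its counterpart in the proof of Theorem~\ref{theorem: JHLWA}. Constructing $\Next$, $\NextP$, $\Prev$, $\PrevP$, $E$, $E'$ costs $O(n^2)$ and constructing $\EdgeID$, $\EdgeIDp$ costs $O(n^2 + m\log m)$; the sorting loops cost $O(n + m\log m)$; resetting $\rcNextElemP$ and $\rcNextElem$ costs $O(nm)$; building $\rcPrev$, $\rcNextP$, $\rcPrevP$, $\rcNext$ and wiring $\rcNextElemP$, $\rcNextElem$ (the loops over $\SE \times V \times V'$) cost $O(n(n+m))$; and the two concluding loops that call $\UpdateRcPrev$ and $\UpdateRcPrev'$ over all triples cost $O(n(n+m))$ (charging each deletion from a $\rcPrev$ or $\rcPrevP$ vector to the corresponding edge). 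Hence $\InitializeDS$ runs in $O((m+n)n)$.

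Next I would analyze the ``while'' loop, exactly as in the proofs of Theorems~\ref{theorem: HDKAL} and~\ref{theorem: JHLWA}. Since after initialization the only writes to $Z$ are $Z[x,x'] := \false$ (in $\UpdateRcPrev$ and $\UpdateRcPrev'$), and since such a $\tuple{x,x'}$ is enqueued into $\removeZ$ at the same moment and is never set true again, each pair is enqueued at most once; as each iteration of the outer loop dequeues one pair, the loop terminates after at most $|V|\cdot|V'|$ iterations. For the running time, all statements of the loop body and of $\UpdateRcPrev$, $\UpdateRcPrev'$ other than their inner ``while'' loops run in constant time, so one iteration of the outer loop costs $O\big(\sum_{r}(|\PrevP_r(y')| + |\Prev_r(y)|) + 1\big)$ plus the cost of the inner loops; summed over the $\le |V|\cdot|V'|$ dequeued pairs this is $O(|V|\cdot|E'| + |V'|\cdot|E| + n^2) = O((m+n)n)$. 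The inner loops are charged by amortization: an iteration of $\UpdateRcPrev(r,y,x')$ that deletes $x$ from $\rcPrev[r,y,x']$ is ascribed to the pair $\tuple{\tuple{x,r,y}, x'}$ consisting of an edge of $G$ and a vertex of $V'$, and since $x$ is never reinserted (the first assertion of Lemma~\ref{lemma: DJWYS 2}, together with the fact that $Z$ only shrinks after initialization) each such pair is charged at most once, for a total of $O(mn)$; the unique non-deleting iteration of that call is ascribed to $\tuple{\tuple{y,y'}, x'}$, and since $\tuple{y,y'}$ is dequeued at most once this totals $O(|V|\cdot|E'|) = O(mn)$. The deleting iterations of $\UpdateRcPrev'(r,x,y')$ are dually ascribed to pairs (edge of $G'$, vertex of $V$) and the non-deleting ones to $\tuple{\tuple{y,y'}, x}$ with $x \in \Prev_r(y)$. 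Hence the ``while'' loop runs in $O((m+n)n)$, and with the $O(n^2)$ cost of the ``return'' statement, Algorithm~\ref{alg: ComputeDirectedSimulationEfficiently} runs in $O((m+n)n)$.

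For correctness, at termination $\removeZ = \emptyset$. By the first and third assertions of Lemma~\ref{lemma: DJWYS 2} (see \eqref{eq: CS 7}), the final $Z$ satisfies Condition~\eqref{eq: CS 4}, hence, by the equivalence recorded in Section~\ref{section: comp-CS}, Condition~\eqref{eq: CS 2}; by the first and fourth assertions of Lemma~\ref{lemma: DJWYS 2} (see the specification of $\rcPrevP$), $Z$ satisfies Condition~\eqref{eq: GFDJS}, hence, by the equivalence recorded at the start of Section~\ref{section: comp-CDS}, Condition~\eqref{eq: CS 3}. By the initialization at statement~\ref{st: HRLSN 2b} of $\InitializeDS$, $Z$ also satisfies Condition~\eqref{eq: CS 1}. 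Therefore the relation returned is a directed simulation between $G$ and $G'$, and by the second assertion of Lemma~\ref{lemma: DJWYS 2} it contains the largest one; so it is the largest directed simulation between $G$ and $G'$.

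I expect the main obstacle to be the amortized analysis of the two interleaved inner ``while'' loops. One has to verify that the charging accounts for $\UpdateRcPrev$ and $\UpdateRcPrev'$ are disjoint (edges and pairs built from $G$ versus those built from $G'$, and $\tuple{\text{dequeued pair}, x'}$ versus $\tuple{\text{dequeued pair}, x}$), that every deleting iteration and every non-deleting iteration receives exactly one charge, and --- the crucial invariant --- that a vertex removed from a $\rcPrev$ or $\rcPrevP$ vector is never put back, which is precisely what the first assertion of Lemma~\ref{lemma: DJWYS 2} and the monotone shrinking of $Z$ after initialization guarantee. Everything else is a mechanical transcription of the proof of Theorem~\ref{theorem: JHLWA}, using the dual statements of Lemma~\ref{lemma: DJWYS 2} wherever Lemma~\ref{lemma: JHDKA 2} was used.
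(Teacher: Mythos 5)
Your proposal is correct and matches the paper's intent exactly: the paper itself only states that Theorem~\ref{theorem: JHLWA 2} ``can be proved analogously'' to Theorem~\ref{theorem: JHLWA}, using Lemma~\ref{lemma: DJWYS 2} in place of Lemma~\ref{lemma: JHDKA 2}, which is precisely the substitution-plus-dual-bookkeeping you carry out. Your fleshed-out amortization of the two interleaved inner loops and the two-sided correctness argument (Conditions~\eqref{eq: CS 4} and~\eqref{eq: GFDJS} via the third and fourth invariants) is a faithful, more detailed rendering of the same argument.
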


\section{Adapting to Fuzzy Automata}
\label{sec: HFKDS}

\red{In this section, we adapt Algorithms~\ref{alg: ComputeSimulationEfficiently} and~\ref{alg: ComputeDirectedSimulationEfficiently} to computing the largest crisp simulation and the largest crisp directed simulation between two finite fuzzy automata.} 

A {\em fuzzy automaton} over a (finite) alphabet $\Sigma$ is a tuple $A = \tuple{Q,\delta,\sigma,\tau}$, where $Q$ is a non-empty set of {\em states}, $\delta: Q \times \Sigma \times Q \to [0,1]$ is called the {\em fuzzy transition function}, $\sigma: Q \to [0,1]$ the {\em fuzzy set of initial states}, and $\tau: Q \to [0,1]$ the {\em fuzzy set of terminal states} (cf.~\cite{CiricIDB12}). It is {\em finite} if $Q$ is finite.

Let $A = \tuple{Q,\delta,\sigma,\tau}$ and $A' = \tuple{Q',\delta',\sigma',\tau'}$ be fuzzy automata over an alphabet $\Sigma$. 
A relation $Z \subseteq Q \times Q'$ is a {\em (crisp) simulation} between $A$ and $A'$ if the following conditions hold (for all possible values of the free variables), where $\to$ and $\land$ denote the usual crisp logical connectives: 
\begin{eqnarray}
\sigma(x) > 0 & \to & \E x' \in Q'\, (Z(x,x') \land \sigma(x) < \sigma'(x')) \label{eq: YRJSK-1} \\
Z(x,x') \land \delta(x,r,y) > 0 & \to & \E y' \in Q'\, (Z(y,y') \land \delta(x,r,y) < \delta'(x',r,y')) \label{eq: YRJSK-2} \\
Z(x,x') \land \tau(x) > 0 & \to & \tau(x) < \tau'(x'). \label{eq: YRJSK-3} 
\end{eqnarray}

The above definition of crisp simulations between fuzzy automata is an adaptation of the definition of ``forward (fuzzy) simulation'' between fuzzy automata introduced by {\'C}iri{\'c} {\em et at.} in~\cite{CiricIDB12}.  

A relation $Z \subseteq Q \times Q'$ is a {\em (crisp) directed simulation} between $A$ and $A'$ if Conditions~\eqref{eq: YRJSK-1}-\eqref{eq: YRJSK-3} and the following one hold (for all possible values of the free variables):
\begin{eqnarray}
Z(x,x') \land \delta'(x',r,y') > 0 & \to & \E y \in Q\, (Z(y,y') \land \delta'(x',r,y') < \delta(x,r,y)) \label{eq: YRJSK-4}
\end{eqnarray}

Given a fuzzy automaton $A = \tuple{Q,\delta,\sigma,\tau}$ over an alphabet $\Sigma$, we define the {\em fuzzy graph corresponding to $A$} to be the fuzzy graph $G = \tuple{V,E,L}$ using $\SE = \Sigma$ and $\SV = \{i,f\}$ such that:
\begin{itemize}
	\item $V = Q \cup \{v_i,v_f\}$, where $v_i$ and $v_f$ are new vertices, with $v_i$ standing for the new unique initial state, and $v_f$ the new unique final state, 
	\item $L(v_i)(i) = L(v_f)(f) = 1$, $L(v_i)(f) = L(v_f)(i) = 0$, and $L(x)(i) = L(x)(f) = 0$ for $x \in Q$ (thus, $i \in \SV$ is used to identify $v_i$ and $f \in \SV$ is used to identify $v_f$), 
	\item for every $r \in \SE$, $x,y \in Q$ and $z \in V$: 
	\begin{itemize} 
		\item $E(x,r,y) = \delta(x,r,y)$, 
		\item $E(v_i,r,x) = \sigma(x)$ and $E(x,r,v_f) = \tau(x)$, 
		\item $E(z,r,v_i) = E(v_f,r,z) = E(v_i,r,v_f) = 0$. 
	\end{itemize}
\end{itemize}

The above definition is a counterpart of the definition of the fuzzy interpretation (in description logic) corresponding to a~fuzzy automaton~\cite{TFS2020}. 
The following lemma can be proved in a straightforward way. 

\begin{lemma}\label{lemma: HJHSA}
Let $A = \tuple{Q,\delta,\sigma,\tau}$ and $A' = \tuple{Q',\delta',\sigma',\tau'}$ be fuzzy automata over the same alphabet. 
Let $G = \tuple{V,E,L}$ and $G' = \tuple{V',E',L'}$ be the fuzzy graphs corresponding to $A$ and $A'$, respectively. 
Let $v_i, v_f \in V$ and $v'_i, v'_f \in V'$ be the vertices such that $L(v_i)(i) = L(v_f)(f) = L'(v'_i)(i) = L'(v'_f)(f) = 1$. If $Z \subseteq Q \times Q'$, then $Z$ is a simulation (respectively, directed simulation) between $A$ and $A'$ iff $Z \cup \{\tuple{v_i,v'_i},\tuple{v_f,v'_f}\}$ is a simulation (respectively, directed simulation) between $G$ and $G'$.  
\end{lemma}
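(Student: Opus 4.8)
The plan is to prove the biconditional by unfolding the definitions on both sides and checking that each condition for simulations between the automata $A, A'$ corresponds exactly to the condition(s) for simulations between the associated graphs $G, G'$ restricted to the relevant vertices. Write $Z^+ = Z \cup \{\tuple{v_i,v'_i},\tuple{v_f,v'_f}\}$. First I would record the structural facts about $G$ that follow directly from the construction: $v_i$ has no incoming edges and $v_f$ has no outgoing edges; $\Next_r(v_i) = \{x \in Q \mid \sigma(x) > 0\}$ with $E(v_i,r,x) = \sigma(x)$; $\Prev_r(v_f) = \{x \in Q \mid \tau(x) > 0\}$ with $E(x,r,v_f) = \tau(x)$; $E(x,r,y) = \delta(x,r,y)$ for $x,y \in Q$; and the labels $L$ are nonzero only at $v_i$ (on $i$) and $v_f$ (on $f$), and likewise for $G'$. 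One small mismatch to flag up front: the automata definition uses strict inequality ($\sigma(x) < \sigma'(x')$, $\delta(x,r,y) < \delta'(x',r,y')$, etc.) whereas the graph definition of simulation uses $\leq$ and $L(x) \leq L'(x')$; I would either treat this as a typo in the automata definition (the intended reading being $\leq$, matching~\cite{CiricIDB12}) or remark that the proof goes through verbatim with whichever convention is used consistently, since the correspondence is conditioned on the same positivity hypotheses on both sides.

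For the forward direction, assume $Z$ is a simulation between $A$ and $A'$ and verify Conditions~\eqref{eq: CS 1} and~\eqref{eq: CS 2} for $Z^+$. For~\eqref{eq: CS 1}: pairs of the form $\tuple{x,x'} \in Z$ with $x,x' \in Q$ give $L(x) = L'(x') = $ the all-zero function, so $L(x) \leq L'(x')$ trivially; the pair $\tuple{v_i,v'_i}$ gives $L(v_i) \leq L'(v'_i)$ since both are the indicator of $i$; similarly $\tuple{v_f,v'_f}$. For~\eqref{eq: CS 2}, split on the shape of the pair and the target $y$: if $\tuple{x,x'} \in Z$ with $x,x' \in Q$ and $E(x,r,y) > 0$, then either $y \in Q$ (use~\eqref{eq: YRJSK-2} to get $y' \in Q'$, hence $\tuple{y,y'} \in Z \subseteq Z^+$), or $y = v_f$ (then $E(x,r,v_f) = \tau(x) > 0$, apply~\eqref{eq: YRJSK-3} to get $\tau(x) \le \tau'(x') = E'(x',r,v'_f)$ and use $\tuple{v_f,v'_f} \in Z^+$); if the pair is $\tuple{v_i,v'_i}$ and $E(v_i,r,y) > 0$ then $y \in Q$ with $\sigma(y) > 0$, and~\eqref{eq: YRJSK-1} supplies the witness $y' \in Q'$; if the pair is $\tuple{v_f,v'_f}$ there are no outgoing edges from $v_f$, so the condition is vacuous. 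The directed case additionally checks~\eqref{eq: CS 3} for $Z^+$ using the dual edge analysis: incoming edges of a $Q$-vertex come either from another $Q$-vertex (handled by~\eqref{eq: YRJSK-4}) or from $v_i$; the only nontrivial new subcase is a pair $\tuple{v_i,v'_i}$ where $v_i$ has no incoming edges (vacuous) and a pair $\tuple{v_f,v'_f}$ where incoming edges of $v'_f$ from $Q'$ are $\tau'$-values matched via~\eqref{eq: YRJSK-3} read the other way — here I would note that~\eqref{eq: YRJSK-3} as stated is a one-directional condition, so the clean matching requires viewing $\tau, \tau'$ edges symmetrically; I would double-check that the intended automata definition of directed simulation also includes a backward condition for $\tau$ (analogous to~\eqref{eq: YRJSK-4} for $\delta$), or else restrict the claim accordingly.

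For the converse direction, assume $Z^+ = Z \cup \{\tuple{v_i,v'_i},\tuple{v_f,v'_f}\}$ is a simulation (resp.\ directed simulation) between $G$ and $G'$ with $Z \subseteq Q \times Q'$, and derive~\eqref{eq: YRJSK-1}--\eqref{eq: YRJSK-3} (resp.\ and~\eqref{eq: YRJSK-4}). This is essentially the reverse reading of the case analysis above: apply~\eqref{eq: CS 2} for $Z^+$ to the pair $\tuple{v_i,v'_i}$ and an edge to a state $x$ with $\sigma(x) > 0$ to get~\eqref{eq: YRJSK-1} — but here one must observe that the witness $y'$ produced by~\eqref{eq: CS 2} satisfies $\tuple{x,y'} \in Z^+$ and $y' \ne v'_i$ (since $v'_i$ has no incoming edge, so $E'(v'_i,r,y') = 0 < \sigma(x)$ forces $y' \ne v'_i$) and $y' \ne v'_f$ (since $L(x) \le L'(y')$ would fail as $L'(v'_f)(f) = 1 > 0 = L(x)(f)$), hence $y' \in Q'$ and $\tuple{x,y'} \in Z$. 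Similar "the witness must land in $Q'$" arguments handle~\eqref{eq: YRJSK-2} (from~\eqref{eq: CS 2} applied to $\tuple{x,x'} \in Z$, $x,y \in Q$) and~\eqref{eq: YRJSK-3} (from~\eqref{eq: CS 2} applied to $\tuple{x,x'} \in Z$ and the edge $E(x,r,v_f) = \tau(x) > 0$, whose only possible $G'$-target with $\tuple{v_f, \cdot} \in Z^+$ is $v'_f$). The directed condition~\eqref{eq: YRJSK-4} comes dually from~\eqref{eq: CS 3}.

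The main obstacle I anticipate is not any deep argument but the bookkeeping around the auxiliary vertices: at every step one must argue that a witness vertex supplied by a graph-condition cannot be $v_i$, $v'_i$, $v_f$, or $v'_f$ when it is "supposed" to be a genuine state, and that $Z^+$ contains no spurious pairs (e.g.\ $\tuple{v_i, x'}$ with $x' \in Q'$) — the latter being guaranteed because $Z^+ \setminus Z = \{\tuple{v_i,v'_i},\tuple{v_f,v'_f}\}$ by hypothesis. The label constraint~\eqref{eq: CS 1} does most of this work: it forces any partner of $v_i$ to have label $\geq$ indicator of $i$, hence to be $v'_i$, and symmetrically for $v_f$; and it forbids any $Q$-vertex from being partnered with $v'_i$ or $v'_f$. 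Once these separation facts are isolated as a preliminary observation, the rest is a mechanical case split that I would present compactly rather than exhaustively.
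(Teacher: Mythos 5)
The paper offers no proof of this lemma at all (it is declared ``straightforward''), so your definitional unfolding is the expected route, and your treatment of the plain-simulation half is correct: the separation facts forced by Condition~\eqref{eq: CS 1} (any partner of $v_i$ must be $v'_i$, of $v_f$ must be $v'_f$, and no $Q$-vertex can be partnered with $v'_i$ or $v'_f$), together with the edge-weight bookkeeping, give both directions cleanly. You are also right to flag the strict-versus-non-strict mismatch between \eqref{eq: YRJSK-1}--\eqref{eq: YRJSK-4} (which use $<$) and \eqref{eq: CS 1}--\eqref{eq: CS 3} (which use $\leq$); with a literal reading of $<$ the equivalence fails (e.g.\ $\delta(x,r,y)=\delta'(x',r,y')$), so treating it as a typo for $\leq$ is the only way the lemma can hold.

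The genuine gap is in your directed-simulation case analysis, which is both garbled and incomplete. Condition~\eqref{eq: CS 3} applied to a pair $\tuple{x,x'}$ quantifies over \emph{outgoing} edges of $x'$ in $G'$, not incoming edges of $x$. Hence for the pair $\tuple{v_i,v'_i}$ the relevant edges are $E'(v'_i,r,y')=\sigma'(y')$ for $y'\in Q'$, and the case is \emph{not} vacuous: it demands that for every $y'$ with $\sigma'(y')>0$ there exist $y\in Q$ with $\tuple{y,y'}\in Z$ and $\sigma'(y')\leq\sigma(y)$ --- a backward condition on initial states that appears nowhere in \eqref{eq: YRJSK-1}--\eqref{eq: YRJSK-4}. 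It is the pair $\tuple{v_f,v'_f}$ that is vacuous (no outgoing edges from $v'_f$), whereas the backward $\tau$-requirement you noticed actually arises from pairs $\tuple{x,x'}\in Z$ via the edge $E'(x',r,v'_f)=\tau'(x')$, forcing $\tau'(x')\leq\tau(x)$. Neither this nor the $\sigma$-condition follows from the stated automaton definition, and a one-state example ($\tau\equiv 0$, $\tau'\equiv 0.5$, $\delta\equiv\delta'\equiv 0$) shows the forward implication of the lemma fails for directed simulations as the definitions stand. So your instinct to ``double-check'' the automaton definition was correct, but you should state it as a definite requirement --- the definition of directed simulation between automata must include backward conditions for both $\sigma$ and $\tau$ --- and you must fix the mislabelled vacuous case before the directed half of the proof can be considered complete.
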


\newcommand{\AlgorithmSimFA}{Algorithm~\ref{alg: ComputeSimulationEfficiently}'\xspace}
\newcommand{\AlgorithmDirSimFA}{Algorithm~\ref{alg: ComputeDirectedSimulationEfficiently}'\xspace}

{\markRed 
Let \AlgorithmSimFA (respectively, \AlgorithmDirSimFA) be the algorithm that, given finite fuzzy automata $A = \tuple{Q,\delta,\sigma,\tau}$ and $A' = \tuple{Q',\delta',\sigma',\tau'}$ over the same alphabet~$\Sigma$, computes the largest simulation (respectively, directed simulation) between $A$ and $A'$ as follows:
\begin{enumerate}
\item construct the fuzzy graph $G = \tuple{V,E,L}$ that corresponds to $A$ and let $v_i, v_f \in V$ be the added vertices (with $L(v_i)(i) = L(v_f)(f) = 1$); 
\item construct the fuzzy graph $G' = \tuple{V',E',L'}$ that corresponds to $A'$ and let $v'_i, v'_f \in V'$ be the added vertices (with $L'(v'_i)(i) = L'(v'_f)(f) = 1$); 
\item run Algorithm~\ref{alg: ComputeSimulationEfficiently} (respectively, Algorithm~\ref{alg: ComputeDirectedSimulationEfficiently}) to compute the largest simulation (respectively, directed simulation) $Z$ between $G$ and $G'$;
\item return $Z - \{\tuple{v_i,v'_i},\tuple{v_f,v'_f}\}$.
\end{enumerate}

\begin{theorem}
\AlgorithmSimFA (respectively, \AlgorithmDirSimFA) returns the largest simulation (respectively, directed simulation) between the given finite fuzzy automata $A = \tuple{Q,\delta,\sigma,\tau}$ and $A' = \tuple{Q',\delta',\sigma',\tau'}$, which are over the same alphabet $\Sigma$. Its complexity is of order $O((m+n)n)$, where $n = |Q| + |Q'|$ and $m = |\delta| + |\delta'|$, with 
\begin{eqnarray*}
|\delta| & = & |\{\tuple{x,r,y} \in Q \times \Sigma \times Q : \delta(x,r,y) > 0\}| \\
|\delta'| & = & |\{\tuple{x',r,y'} \in Q' \times \Sigma \times Q' : \delta(x',r,y') > 0\}|.
\end{eqnarray*} 
\end{theorem}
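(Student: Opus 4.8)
The plan is to combine Lemma~\ref{lemma: HJHSA} (correctness of the reduction) with Theorems~\ref{theorem: JHLWA} and~\ref{theorem: JHLWA 2} (correctness and complexity of Algorithms~\ref{alg: ComputeSimulationEfficiently} and~\ref{alg: ComputeDirectedSimulationEfficiently} on fuzzy graphs), checking only that nothing is lost or blown up in the translation between automata and their corresponding fuzzy graphs. I will treat the simulation case in detail; the directed-simulation case is fully analogous, using~\eqref{eq: YRJSK-4} and Condition~\eqref{eq: CS 3} in place of their simulation counterparts, and invoking Theorem~\ref{theorem: JHLWA 2} instead of Theorem~\ref{theorem: JHLWA}.

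\textbf{Correctness.} First I would observe that the four-step procedure defining \AlgorithmSimFA does terminate and its output is well defined: steps~1 and~2 build the fuzzy graphs $G$ and $G'$ corresponding to $A$ and $A'$ according to the definition given above, step~3 is Algorithm~\ref{alg: ComputeSimulationEfficiently}, which by Theorem~\ref{theorem: JHLWA} returns the largest simulation $Z$ between $G$ and $G'$, and step~4 simply removes the two pairs $\tuple{v_i,v'_i}$ and $\tuple{v_f,v'_f}$. Let $Z^\ast = Z - \{\tuple{v_i,v'_i},\tuple{v_f,v'_f}\}$ be the returned relation. I would first note that, since $v_i$ is the unique vertex of $G$ with $L(v_i)(i)=1$ and $v'_i$ the unique one of $G'$ with $L'(v'_i)(i)=1$ (and likewise for $v_f,v'_f$ via $f$), Condition~\eqref{eq: CS 1} forces any pair in $Z$ whose first component is $v_i$ (resp.\ $v_f$) to have second component $v'_i$ (resp.\ $v'_f$), and conversely $v'_i$ can only be matched from $v_i$. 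Hence $Z \cap ((\{v_i,v_f\}\times V') \cup (V\times\{v'_i,v'_f\}))$ is exactly the set of pairs among $\tuple{v_i,v'_i},\tuple{v_i,v'_f},\tuple{v_f,v'_i},\tuple{v_f,v'_f}$ that survive~\eqref{eq: CS 1}; the cross pairs $\tuple{v_i,v'_f}$ and $\tuple{v_f,v'_i}$ are killed by~\eqref{eq: CS 1} since $L(v_i)(f)=0<1=L'(v'_f)(f)$ and similarly. Then I would argue: if $Z$ contains $\tuple{v_i,v'_i}$ and $\tuple{v_f,v'_f}$, then $Z^\ast \subseteq Q\times Q'$ and $Z = Z^\ast \cup \{\tuple{v_i,v'_i},\tuple{v_f,v'_f}\}$, so by Lemma~\ref{lemma: HJHSA} $Z^\ast$ is a simulation between $A$ and $A'$; conversely any simulation $W$ between $A$ and $A'$ gives, by the same lemma, a simulation $W\cup\{\tuple{v_i,v'_i},\tuple{v_f,v'_f}\}$ between $G$ and $G'$, hence is contained in $Z$, hence $W = W - \{\tuple{v_i,v'_i},\tuple{v_f,v'_f}\} \subseteq Z^\ast$. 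So $Z^\ast$ is the largest simulation between $A$ and $A'$. The one corner case to dispose of is when $\tuple{v_i,v'_i}\notin Z$ or $\tuple{v_f,v'_f}\notin Z$: I would show that then the largest simulation between $A$ and $A'$ is empty (because~\eqref{eq: YRJSK-1} or~\eqref{eq: YRJSK-3} would be unsatisfiable for the whole of $A$ in the manner forced by the graph encoding), and that in that case $Z^\ast=\emptyset$ as well, since every pair of $Z$ in $Q\times Q'$ ultimately depends, through iterated applications of~\eqref{eq: CS 2}, on the presence of $\tuple{v_f,v'_f}$, and symmetrically on $\tuple{v_i,v'_i}$; more simply, Lemma~\ref{lemma: HJHSA} applied to $W=\emptyset$ shows $\emptyset\cup\{\tuple{v_i,v'_i},\tuple{v_f,v'_f}\}$ is a simulation between $G$ and $G'$, so it is always a subset of $Z$, so in fact both pairs are always in $Z$ and this corner case does not arise.

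\textbf{Complexity.} Let $n_0 = |Q|+|Q'|$ and $m_0 = |\delta|+|\delta'|$ be the sizes of the input automata. The fuzzy graphs $G$ and $G'$ have $|V| = |Q|+2$, $|V'| = |Q'|+2$, so $n := |V|+|V'| = n_0 + 4 = O(n_0)$; and $|E| \le |\delta| + 2|Q| + 1$ (the $\delta$-edges, the $\sigma$-edges into $v_f$ via $v_i$, the $\tau$-edges into $v_f$, and none into $v_i$ or out of $v_f$), so $m := |E|+|E'| = O(m_0 + n_0)$. Steps~1 and~2 construct $G$ and $G'$ in time $O(n_0^2 + m_0)$, which is within $O((m_0+n_0)n_0)$. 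By Theorem~\ref{theorem: JHLWA}, step~3 runs in time $O((m+n)n) = O((m_0+n_0)n_0)$. Step~4 removes two pairs in $O(1)$. Summing, the total is $O((m_0+n_0)n_0)$, which is the claimed $O((m+n)n)$ with $n=|Q|+|Q'|$ and $m=|\delta|+|\delta'|$. The directed case is identical with Theorem~\ref{theorem: JHLWA 2}.

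I do not expect a genuine obstacle here; the statement is essentially a corollary of Lemma~\ref{lemma: HJHSA} together with the two main theorems. The only point requiring a little care is the bookkeeping in the correctness argument around the auxiliary vertices $v_i,v'_i,v_f,v'_f$ — specifically verifying, via Condition~\eqref{eq: CS 1} and the labels $i,f$, that $Z$ necessarily contains exactly $\tuple{v_i,v'_i}$ and $\tuple{v_f,v'_f}$ among all pairs touching those four vertices, so that the subtraction in step~4 is exactly the inverse of the addition in Lemma~\ref{lemma: HJHSA}. Everything else is a size computation.
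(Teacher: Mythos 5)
Your overall route is exactly the paper's: the theorem is obtained by combining Lemma~\ref{lemma: HJHSA} with Theorem~\ref{theorem: JHLWA} (respectively Theorem~\ref{theorem: JHLWA 2}) and observing that the graph construction costs only $O(m+n)$ and inflates $n$ and $m$ by at most an additive $O(n)$ term. The paper states this in a single sentence, and your bookkeeping around $v_i,v_f,v'_i,v'_f$ and the size estimates is a correct and welcome elaboration of that sentence.

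One step of your corner-case analysis is, however, wrong as written. You claim that Lemma~\ref{lemma: HJHSA} applied to $W=\emptyset$ shows that $\{\tuple{v_i,v'_i},\tuple{v_f,v'_f}\}$ is always a simulation between $G$ and $G'$, hence always contained in the largest one $Z$. But $\emptyset$ is in general \emph{not} a simulation between $A$ and $A'$: unlike Conditions~\eqref{eq: YRJSK-2} and~\eqref{eq: YRJSK-3}, Condition~\eqref{eq: YRJSK-1} is not guarded by $Z(x,x')$ in its antecedent, so it fails for the empty relation as soon as $\sigma(x)>0$ for some $x\in Q$. Correspondingly, $\{\tuple{v_i,v'_i},\tuple{v_f,v'_f}\}$ violates Condition~\eqref{eq: CS 2} at the pair $\tuple{v_i,v'_i}$ whenever $v_i$ has a nonzero outgoing edge, since no successor of $v_i$ is matched by that relation. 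The clean way to close your argument is instead: whenever a simulation between $A$ and $A'$ exists at all (which is the only situation in which the phrase ``the largest simulation between $A$ and $A'$'' is meaningful), your first direction already yields $\{\tuple{v_i,v'_i},\tuple{v_f,v'_f}\}\subseteq Z$, and then your label argument via Condition~\eqref{eq: CS 1} together with the two-way use of Lemma~\ref{lemma: HJHSA} finishes the proof exactly as you describe. In the degenerate case where no simulation between $A$ and $A'$ exists (e.g.\ $\sigma\not\equiv 0$ while $\sigma'\equiv 0$), the conclusion of the theorem is undefined and the algorithm's output need not be a simulation of the automata at all; the paper glosses over this case just as silently, so it does not count against you, but your purported argument that the case ``does not arise'' should be removed rather than repaired. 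The complexity analysis and the directed-simulation half are fine.
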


This theorem follows immediately from Lemma~\ref{lemma: HJHSA} and Theorem~\ref{theorem: JHLWA} (respectively, Theorem~\ref{theorem: JHLWA 2}) and the fact that the first two steps of the algorithm (for constructing $G$ and $G'$) can be done in time $O(m+n)$.
}

\section{Conclusions}
\label{sec: conc}

As far as we know, before the current work there were no algorithms directly formulated for computing the largest crisp simulation and the largest crisp directed simulation between two finite fuzzy graph-based structures. One can try to adapt the algorithm of computing the greatest right/left invariant Boolean quasi-order matrix of a~weighted automaton over an additively idempotent semiring, given by Stanimirovi{\'c} {\em et at.} in~\cite{StanimirovicSC2019}, to obtain algorithms for those tasks. However, the complexity order $O(n^5)$ is too high. \red{Crisping a given finite fuzzy graph and then applying one of the algorithms given in~\cite{BloomP95,HenzingerHK95,CompBSDLP} to the obtained crisp graph also results in an algorithm with a high complexity order, $O(l(m+n)n)$, where $l$ is the number of fuzzy values occurring in the specification of the input graph and, in the worst case, can be $m+n$.} 

In this article, we have given efficient algorithms with the complexity $O((m+n)n)$ for computing the largest crisp simulation and the largest crisp directed simulation between two finite fuzzy labeled graphs, where $n$ is the number of vertices and $m$ is the number of nonzero edges of the input fuzzy graphs. \red{We have also adapted them to computing the largest crisp simulation and the largest crisp directed simulation between two finite fuzzy automata.} 


\bibliography{BSfDL}
\bibliographystyle{plain}

\end{document}